\documentclass[11pt]{article}

\usepackage{enumitem}
\usepackage{amsthm}
\usepackage{amsmath}
\usepackage{amssymb}
\usepackage{graphicx}
\usepackage{fancyhdr}
\usepackage{mathabx}
\usepackage{hyperref}
\usepackage{algorithm}
\usepackage{algorithmic}
\usepackage{bbm}

\usepackage{geometry}\geometry{margin=1in}

\newtheorem{theorem}{Theorem}
\newtheorem{lemma}[theorem]{Lemma}

\newtheorem{claim}[theorem]{Claim}
\newtheorem{corollary}[theorem]{Corollary}
\newtheorem{definition}[theorem]{Definition}
\theoremstyle{remark}
\newtheorem{remark}{Remark}
\theoremstyle{definition}

\DeclareMathOperator{\inv}{inv}
\DeclareMathOperator{\cost}{cost}

\DeclareMathOperator{\Err}{Err}

\title{Near-Optimal Bounds for Online Caching with Machine Learned Advice}
\author{Dhruv Rohatgi \\ MIT \\ drohatgi@mit.edu}

\begin{document}

\maketitle
\begin{abstract}
In the model of online caching with machine learned advice, introduced by Lykouris and Vassilvitskii, the goal is to solve the caching problem with an online algorithm that has access to next-arrival predictions: when each input element arrives, the algorithm is given a prediction of the next time when the element will reappear. The traditional model for online caching suffers from an $\Omega(\log k)$ competitive ratio lower bound (on a cache of size $k$). In contrast, the augmented model admits algorithms which beat this lower bound when the predictions have low error, and asymptotically match the lower bound when the predictions have high error, even if the algorithms are oblivious to the prediction error. In particular, Lykouris and Vassilvitskii showed that there is a prediction-augmented caching algorithm with a competitive ratio of $O(1+\min(\sqrt{\eta/\textsc{opt}}, \log k))$ when the overall $\ell_1$ prediction error is bounded by $\eta$, and $\textsc{opt}$ is the cost of the optimal offline algorithm.

The dependence on $k$ in the competitive ratio is optimal, but the dependence on $\eta/\textsc{opt}$ may be far from optimal. In this work, we make progress towards closing this gap. Our contributions are twofold. First, we provide an improved algorithm with a competitive ratio of $O(1 + \min((\eta/\textsc{opt})/k, 1) \log k)$. Second, we provide a lower bound of $\Omega(\log \min((\eta/\textsc{opt})/(k \log k), k))$.
\end{abstract}

\section{Introduction}

In the \emph{online caching} problem (also known as \emph{paging}), we are given a sequence of elements which arrive one at a time, and we must maintain a cache of some fixed size $k$. The cost of a caching algorithm on some input is the number of cache misses. The standard goal is to design an online algorithm with minimal competitive ratio, relative to the optimal offline algorithm.

As a fundamental problem in the study of online algorithms, caching has been extensively studied \cite{Sleator1985, McGeoch1991, Fiat1991, Achlioptas2000}; it is well-known that the optimal competitive ratio of any deterministic algorithm is $k$, and the optimal competitive ratio of any randomized algorithm is $2H(k) - 1 = \Theta(\log k)$, where $H(k)$ is the $k$-th harmonic number \cite{Achlioptas2000}.

However, the traditional framework for analyzing online algorithms---namely, worst-case competitive ratios---is overly pessimistic, by virtue of requiring worst-case analyses. Real-world data often satisfies nice properties---it may be predictable, or simply random, or even just not adversarial---and for this reason, theoretically unsound algorithms can perform very well in practice. Numerous attempts have been made to theoretically ground this observation; some of the more prominent are average-case analyses \cite{KP2000, Ajwani2007, McGregor2014} and smoothed analyses \cite{Spielman2004, Spielman2009}, both within online algorithms and beyond.

One such attempt which has recently garnered significant attention is the framework of online algorithms with machine learned advice. In this model, the online algorithm is augmented with an oracle that makes certain predictions about future data. In practice, this oracle is likely to be a machine learned predictor. Since machine learning is imperfect (and can sometimes be wildly wrong), the algorithm must incorporate the oracle's advice judiciously, without being given any bound on the oracle's error. The goal is to develop an algorithm which is both \emph{consistent}, in that it nearly matches the best offline algorithm when the predictor is nearly perfect, as well as \emph{robust}, in that its worst-case performance is good even when the oracle is arbitrarily bad. The performance of the algorithm should be bounded as a function of some measure of the oracle error, even though the algorithm is oblivious to this error.

The ML advice model has in the past been applied to the ski rental problem \cite{Purohit2018, Gollapudi2019}, job scheduling \cite{Purohit2018, Mitzenmacher2019} and online revenue maximization \cite{MV2017}; it has also been used to achieve theoretical and practical gains in streaming frequency estimation \cite{Hsu2018} and data structures \cite{Kraska2018}. Most relevant to this paper is prior work by \cite{LV2018} in which it was shown how the model can be applied to the online caching problem. In particular, they considered augmenting caching algorithms with an oracle that predicts the next arrival of each element. The oracle's $\ell_1$ error is defined as the sum over all elements of the absolute difference between the element's true and predicted next arrival. In this model, they developed a ``predictive marker algorithm'' with the following guarantee:

\begin{theorem}\cite{LV2018}\label{theorem:lv}
The predictive marker algorithm achieves a competitive ratio of $\min(2 + 4\sqrt{\eta/\textsc{opt}}, 4H(k))$ when the oracle has $\ell_1$ error of at most $\eta$, and the cost of the optimal offline algorithm is $\textsc{opt}$.
\end{theorem}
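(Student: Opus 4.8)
The plan is to show that the predictive marker algorithm satisfies, on \emph{every} input, both $\cost \le 4H(k)\cdot\opt$ and $\cost \le (2 + 4\sqrt{\eta/\opt})\cdot\opt$; since the algorithm is oblivious to $\eta$, the claimed bound is then just the minimum of these two. Recall that this algorithm runs in phases like the classical Marker algorithm, but on a cache miss it consults the next-arrival predictions to choose which unmarked page to evict --- roughly, the unmarked page predicted to reappear furthest in the future --- while retaining enough blind random evictions to stay robust when the predictions are poor. I would work with the usual phase decomposition: cut the request sequence into maximal \emph{phases}, each containing exactly $k$ distinct pages; in a given phase call a page \emph{clean} if it did not appear in the previous phase and \emph{stale} otherwise; and let $c_i$ be the number of clean pages in phase $i$. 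Two standard facts drive everything: (i) $\opt \ge \tfrac12\sum_i c_i$, and (ii) in phase $i$ the algorithm incurs exactly the $c_i$ misses forced by clean pages plus some number $L_i$ of misses caused by re-requesting stale pages it has evicted, so $\cost = \sum_i(c_i + L_i)$. Everything reduces to bounding $\mathbb{E}[L_i]$ from two sides.

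For robustness, observe that the predictive marker still evicts only \emph{unmarked} pages and retains enough internal randomization that the classical Marker potential/charging argument survives up to constants: while $j$ pages are marked the expected number of induced stale re-fetches is $O(c_i/(k-j))$, and summing over $j$ gives $\mathbb{E}[L_i] = O(c_i H(k))$; combined with (i) and (ii), and bookkeeping the factor $2$ from (i) together with the loss from the blind/predictive split, this yields $\cost \le 4H(k)\cdot\opt$. For the error-dependent bound the key is to localize the oracle's $\ell_1$ error: assign to phase $i$ a quantity $\eta_i$ capturing the discrepancies of predictions that are ``active'' during that phase, arranged so that $\sum_i \eta_i \le \eta$, and prove a per-phase estimate $\mathbb{E}[L_i] = O(\sqrt{c_i\,\eta_i})$. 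This is the heart of the proof and rests on a ``triangular'' charging argument: with perfect predictions the algorithm reduces to Belady's rule on the unmarked pages, which never evicts a page that will be re-requested within the phase and so has $L_i = 0$; with imperfect predictions, each harmful eviction is of a page whose predicted next arrival exceeded its true next arrival, and when these harmful evictions are ordered within the phase, the $j$-th of them can occur only after the first $j-1$ genuinely ``safe'' eviction targets have been exhausted, which forces the associated prediction discrepancy to be $\gtrsim j$; since $\sum_{j\le m} j \asymp m^2$ and the total localized error is $\eta_i$, the number $m$ of harmful evictions is $O(\sqrt{\eta_i})$, and tracking the $c_i$ evictions per phase gives the stated $O(\sqrt{c_i\eta_i})$.

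To conclude the consistency bound I would sum the per-phase estimates and apply Cauchy--Schwarz: $\cost = \sum_i(c_i+L_i) \le \sum_i c_i + O\!\left(\sum_i \sqrt{c_i\eta_i}\right) \le \sum_i c_i + O\!\left(\sqrt{\bigl(\sum_i c_i\bigr)\bigl(\sum_i \eta_i\bigr)}\right) \le 2\opt + O(\sqrt{\opt\cdot\eta})$, which is $\opt\cdot(2 + O(\sqrt{\eta/\opt}))$, and then tighten the constants to land on exactly $2 + 4\sqrt{\eta/\opt}$. The step I expect to be the main obstacle is the per-phase error bound: one must define $\eta_i$ so that it is both honestly chargeable (no unit of oracle error is claimed by two phases, giving $\sum_i\eta_i\le\eta$) and strong enough to dominate $\mathbb{E}[L_i]$, and this has to be done for an eviction rule that simultaneously carries the randomization needed for the $H(k)$ bound --- reconciling ``follow the predictions'' with ``stay random'' in one algorithm, and pushing the triangular argument through the cascade of extra misses that a single mistaken eviction can trigger, is where the real work lies; getting the absolute constants down to exactly $4H(k)$ and $2+4\sqrt{\eta/\opt}$ is a secondary nuisance.
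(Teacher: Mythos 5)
First, note that the paper itself does not prove this statement: Theorem~\ref{theorem:lv} is quoted from \cite{LV2018} as background, and the closest the paper comes to a proof is the sketch in Section~\ref{section:technique} of how that argument works (eviction chains headed by clean elements; trust the predictions within a chain until it reaches length $\Omega(\log k)$, then evict random unmarked elements; charge each chain's length to the prediction errors of that chain's own elements, which are disjoint across chains). Your skeleton --- phases, clean/stale elements, $\sum_i c_i \le 2\,\textsc{opt}$ via Lemma~\ref{lemma:fiat}, a quadratic (``triangular'') charging of extra misses to prediction error, and a final Cauchy--Schwarz --- is the right shape and matches the spirit of that argument, but two central pieces are missing rather than proved.

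The first gap is the robustness half. You never pin down the algorithm, and in particular you omit its defining feature: predictions are followed only until an eviction chain reaches length about $H(k)$, after which that chain reverts to uniformly random unmarked evictions. Without this cap, your claim that ``the classical Marker potential/charging argument survives up to constants'' is simply false: the $O(H(k))$ guarantee of the randomized marker algorithm comes precisely from uniformly random unmarked evictions, and a marker-based algorithm that always follows (adversarially bad) predictions is only $k$-competitive --- the paper itself points this out in Section~\ref{section:prelim}. The $4H(k)$ bound is obtained chain by chain (at most $\sim H(k)$ prediction-guided evictions plus an expected $\sim H(k)$ random ones per chain, times $L \le 2\,\textsc{opt}$ chains), not by the phase-wise ``$c_i/(k-j)$'' computation you invoke. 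The second gap is the consistency half: the error localization you defer (``define $\eta_i$ so that no unit of error is claimed twice'') is exactly the content of the \cite{LV2018} proof, and it is done per chain, not per phase. Within a chain $(c,e_1,e_2,\dots,e_\ell)$ the evicted elements' predicted next arrivals are (weakly) decreasing while their true arrivals are increasing, so a chain of length $\ell$ forces $\ell_1$ error $\Omega(\ell^2)$ among the predictions of that chain's own elements; disjointness of chains then gives $\sum_c \eta_c \le \eta$, and concavity/Cauchy--Schwarz over the $L \le 2\,\textsc{opt}$ chains yields $2+O(\sqrt{\eta/\textsc{opt}})$. You state the quadratic intuition but do not prove the ``discrepancy at least $j$'' step, do not establish disjointness, and do not address the interaction between the two regimes (only the prediction-following prefix of a chain obeys the triangular bound; the random tail must be controlled separately, which is what the length cap is for). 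So the proposal is a plausible outline in the same direction as \cite{LV2018}, but the two steps that constitute the actual proof --- the capped eviction rule that yields $4H(k)$, and the per-chain monotonicity/disjointness argument that yields expected chain length $O(\sqrt{\eta_c})$ --- are acknowledged obstacles rather than completed arguments.
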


Note that the competitive ratio achieved is $O(\min(\sqrt{\eta/\textsc{opt}}, \log k))$. This ratio is of course (asymptotically) optimal as a function of $k$. However, it is an open question how far the dependence on $\eta/\textsc{opt}$ (a measure of relative error, in some sense) can be improved. In particular, it would be interesting to understand how accurate the predictions need to be in order to provide an improvement in the competitive ratio. From the previous work, it is only shown that $\eta/\textsc{opt} = o(\log^2 k)$ suffices.

In this paper, we work with the same model, and make progress on this question. Building upon the techniques used in \cite{LV2018}, we provide an algorithm with an improved competitive ratio:

\begin{theorem}\label{theorem:nonmarker}
There is an algorithm for caching with predictions that achieves a competitive ratio of $$O\left(1 + \min\left(1, \frac{\eta/\textsc{opt}}{k}\right) \log k \right)$$ when the oracle has $\ell_1$ error of at most $\eta$.
\end{theorem}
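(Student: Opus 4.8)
The plan is to build on the predictive marker framework of Theorem~\ref{theorem:lv}: use the next-arrival predictions to steer evictions, but robustify against large error, while sharpening both the eviction rule and (more importantly) the error accounting. First I would set up the standard \emph{phase} (working-set) decomposition of the request sequence, with each phase containing exactly $k$ distinct elements, and recall that if $\phi_i$ is the number of elements of phase $i$ absent from phase $i-1$ (the ``clean'' elements), then $\textsc{opt} = \Theta(\sum_i \phi_i)$ and the random marking algorithm has cost $O(\log k)\cdot\textsc{opt}$. The target then factors into two guarantees for the algorithm: (i) expected cost $O(\textsc{opt}\log k)$ no matter how bad the predictions are, and (ii) expected cost $O(\textsc{opt} + (\eta/k)\log k)$. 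Taking the minimum of (i) and (ii) and using $\min(a+b,c)\le a+\min(b,c)$ for nonnegative reals gives expected cost $O\big(\textsc{opt} + \min(\textsc{opt},\eta/k)\log k\big)$, i.e.\ the claimed competitive ratio; so all of the work is in obtaining (ii) via an algorithm that simultaneously meets (i).

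For the algorithm and guarantee (i): within each phase the algorithm only ever evicts \emph{unmarked} elements, and it keeps an accumulated-error counter (as in \cite{LV2018}); as long as the counter is small it evicts according to the predictions, favoring unmarked elements whose predicted next arrival is largest (with perfect predictions these are exactly the ``obsolete'' elements not requested again in the phase, so the only misses are the $\phi_i$ forced ones, giving cost $O(\textsc{opt})$), and once the counter exceeds a threshold it reverts to evicting a uniformly random unmarked element for the rest of the phase. Because the random marking algorithm is $O(\log k)$-competitive, this fallback delivers (i); one only needs to check that triggering the fallback does not itself inflate the cost by more than a constant factor, which follows by charging the error that triggered it.

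The crux is (ii). Here I would argue that the algorithm's cost in excess of $O(\textsc{opt})$ decomposes as a sum, over ``bad'' decisions, of $O(\log k)$ each, where a bad decision is one on which the predictions steered the algorithm to evict an element that is in fact requested again before its phase ends (after which it behaves, for the remainder of the phase, like the random marking algorithm and can cascade to $\Theta(\log k)$ extra misses before the next phase resets everything). The point is that each bad decision must be \emph{paid for} by $\Omega(k)$ units of $\ell_1$ prediction error: to fool the algorithm, a to-be-requested element's predicted arrival must be pushed past the end of its phase and past the predictions of the genuinely obsolete elements, and since a phase spans at least $k$ requests this is an expensive lie; moreover the charged error budgets can be taken essentially disjoint across bad decisions. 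Summing, the number of bad decisions is $O(\eta/k)$, so the excess cost is $O((\eta/k)\log k)$, which is (ii), and the final bound follows by combining with (i) as above.

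The step I expect to be the main obstacle is exactly this charging argument for (ii), and it is delicate for two reasons. First, the online algorithm does not know where phase boundaries fall, so ``predicted arrival past the end of the phase'' must be replaced by an online proxy (e.g.\ a doubling estimate of the phase length), and one must show that errors in this estimate are themselves absorbed into $\textsc{opt}$ or into the $\ell_1$ error rather than inflating the cost. Second, the $\Theta(\log k)$ cascades triggered by bad decisions must be bounded in expectation and attributed to the error that caused them, and this must be done jointly with the accumulated-error fallback of (i) so that the two accountings do not double-count the same error --- this is the place where a genuinely sharper analysis than the $\sqrt{\eta/\textsc{opt}}$ bookkeeping of \cite{LV2018} is needed.
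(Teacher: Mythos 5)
Your high-level reduction (prove a guarantee of $O(\textsc{opt}+(\eta/k)\log k)$ and combine with an $O(\log k)$-robust fallback) matches the paper, but the core charging claim in your step (ii) is false, and the algorithmic restriction you impose makes the target bound unreachable. You assert that fooling the algorithm into evicting a to-be-requested element costs $\Omega(k)$ of $\ell_1$ error because its prediction must be ``pushed past the end of its phase.'' It does not: the genuinely obsolete unmarked element competing in the argmax may have its true next arrival immediately after the phase boundary and a perfectly accurate prediction, so the adversary only needs to push the victim $e$'s prediction slightly beyond that, at a cost comparable to the number of requests between $e$'s reappearance and the end of the phase. This is exactly the quantity $N^*(c)$ in Lemma~\ref{lemma:lengthtoerror}, and it can be as small as $O(1)$; with per-chain error $N^*\approx\eta/\textsc{opt}$ an adversary buys $\Theta(\textsc{opt})$ bad decisions from error $\eta$, not the $O(\eta/k)$ your accounting needs. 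Moreover, because your algorithm ``only ever evicts unmarked elements,'' once $e$ reappears the random-unmarked fallback cascades among the roughly $N^*$ still-unmarked stale elements, essentially all of which reappear, costing $\Theta(\log N^*)=\Theta(\log(\eta/\textsc{opt}))$ per chain. So your scheme tops out at the warm-up bound $O(1+\min(\log(\eta/\textsc{opt}),\log k))$ of \textsc{lmarker} and cannot give, e.g., $O(1)$ when $\eta/\textsc{opt}=k/\log k$, which Theorem~\ref{theorem:nonmarker} requires.

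The paper's improvement comes precisely from abandoning that restriction: at the second link of each chain, \textsc{lnonmarker} evicts a uniformly random element of the \emph{entire} cache, marked or not, so the cascade is entered only with probability $N^*(c)/k$ and the expected chain length is $O(1+(N^*(c)/k)\log k)$ (Claim~\ref{claim:chainlength}); summing $N^*(c)\leq O(\eta)$ (Lemma~\ref{lemma:totalerror}) then yields the $(\eta/k)\log k$ term. Evicting marked elements destroys the marker invariant that the number of chains equals the number of clean elements, and the substantive new work---for which your proposal has no counterpart---is the bijection to cache elements that never appear in the phase and the disjoint error charging of Lemmas~\ref{lemma:eunique} and~\ref{lemma:injection}, showing the number of chains exceeds the clean count by at most $O(\eta/k)$. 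Two smaller points: phase boundaries are determined online (a phase ends when the $(k{+}1)$st distinct element since its start arrives), so no doubling estimate of the phase length is needed; and robustness is obtained by black-box combination with the classical marker algorithm (Theorem~\ref{theorem:sim}) rather than an accumulated-error counter, which sidesteps your worry about double-counting error between the two accountings.
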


Our bound matches the prior work when $\eta/\textsc{opt} \geq k$ (in that neither algorithm improves upon the classical $O(\log k)$ competitive ratio) and is strictly better when $\eta/\textsc{opt} < k$. For example, if $\eta/\textsc{opt} = k/\log k$, then the prior algorithm had a competitive ratio of $O(\log k)$, whereas we show that $O(1)$ is possible.

Furthermore, we provide a lower bound, stated informally as follows:

\begin{theorem}
Any randomized algorithm for caching with predictions must have a competitive ratio which is $$\Omega\left(\log \min\left(\frac{\eta/\textsc{opt}}{k\log k}, k\right)\right)$$ as a function of $\eta/\textsc{opt}$ and $k$.
\end{theorem}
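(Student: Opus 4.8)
I would prove this by Yao's principle: for each target value $r$ of $\eta/\textsc{opt}$ I would build a distribution over instances (a request sequence together with a next-arrival prediction for each request) on which every deterministic algorithm has competitive ratio $\Omega(\log\min(r/(k\log k),k))$. The bound is $O(1)$, hence vacuous, unless $r$ exceeds a large constant times $k\log k$, so assume this and set the ``effective size'' $k':=\min(\lceil\sqrt{r/(Ck\log k)}\,\rceil,k)$ for a large enough absolute constant $C$, so that $\log k'=\Theta(\log\min(r/(k\log k),k))$. The instance uses $k+1$ pages: $k'+1$ \emph{active} pages $a_0,\dots,a_{k'}$ and $k-k'$ \emph{filler} pages $g_1,\dots,g_{k-k'}$, and consists of $L$ rounds for a large parameter $L$; each round requests $g_1,g_2,\dots,g_{k-k'}$ in order (a ``drumbeat'') and then one uniformly random active page. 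I would set the prediction of each filler request to its true next-arrival time (each $g_m$ recurs exactly $k-k'+1$ steps later, so this costs zero error) and the prediction of each active request to the current time plus a fixed offset $T\ge0$. Every prediction is then a deterministic function of the time step, so the predictions leak no information about the future, and the cost lower bound below applies to any algorithm regardless of how it uses them.

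\textbf{Computing $\textsc{opt}$ and $\eta$.}
The offline optimum can keep all $k-k'$ fillers permanently cached (paying for them only once) and run Belady's rule on the active subsequence with its remaining $k'$ slots; since the active subsequence is a uniformly random length-$L$ sequence over $k'+1$ pages served with a cache of size $k'$, whose optimal cost is $\Theta(L/(k'\log k'))$ by the classical analysis, this gives $\textsc{opt}=O(k+L/(k'\log k'))$, and a standard disjoint-window argument gives the matching $\Omega(L/(k'\log k'))$, so $\textsc{opt}=\Theta(L/(k'\log k'))$ (with concentration) for $L$ large. For the error: fillers contribute $0$; an active request has true next-arrival gap distributed as $(k-k'+1)$ times a $\mathrm{Geom}(1/(k'+1))$ variable, so its error is $|T-(\text{that gap})|$, and the total error concentrates at $\Theta(L\cdot\max(T,(k'+1)(k-k'+1)))$. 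Hence $\eta/\textsc{opt}=\Theta(\max(T,(k'+1)(k-k'+1))\cdot k'\log k')$; since $(k'+1)(k-k'+1)\cdot k'\log k'=O(k'^2k\log k')\le r$ by the choice of $k'$, I can pick $T\ge0$ (and fine-tune $L$) so that $\eta/\textsc{opt}=r$ up to a $1\pm o(1)$ factor. This is where the ``$k\log k$'' and the square root originate: the drumbeat forces every active prediction to err by $\Omega((k'+1)(k-k'+1))$ even with $T=0$, so realizing effective size $k'$ costs an error-to-$\textsc{opt}$ ratio of at least $\Omega(k'^2k\log k')$.

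\textbf{Cost of any algorithm, and conclusion.}
Fix a deterministic algorithm. Since its cache is always full, just before round $i$'s drumbeat and just before round $i$'s active request it holds either $k-k'$ or $k-k'-1$ fillers; call these $f_i^{\mathrm{start}}$ and $f_i$. Round $i$'s drumbeat requests all $k-k'$ fillers, so it incurs at least $(k-k')-f_i^{\mathrm{start}}$ filler misses; and when $f_i=k-k'$ the cache holds exactly $k'$ active pages, so the (independent, uniform) active request of round $i$ misses with probability exactly $1/(k'+1)$. Also $f_i^{\mathrm{start}}=k-k'$ forces $f_i=k-k'$, since a drumbeat with all fillers present causes no evictions. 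Writing $X_i=\mathbbm{1}[f_i=k-k']$, summing over rounds and taking expectations over the random active choices bounds the expected cost below by $(L-\mathbb{E}\sum_iX_i)+\tfrac{1}{k'+1}\mathbb{E}\sum_iX_i\ge L-\tfrac{k'}{k'+1}L=\tfrac{L}{k'+1}$, unconditionally. Combining with $\textsc{opt}=\Theta(L/(k'\log k'))$ and the additive-constant convention for competitive ratio (legitimate for $L$ large, via concentration to handle the instances where $\eta/\textsc{opt}$ strays from $r$), every algorithm has competitive ratio $\Omega(\log k')=\Omega(\log\min(r/(k\log k),k))$ on an instance family with $\eta/\textsc{opt}=(1\pm o(1))r$, which is the claim; when $r\ge Ck^3\log k$ the construction degenerates to $k'=k$ with no fillers and recovers the classical $\Omega(\log k)$ lower bound \cite{Achlioptas2000}.

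\textbf{Main obstacle.}
The conceptual core --- a filler ``drumbeat'' that decouples the effective problem size $k'$ from the true cache size $k$ --- turns out to be clean; the delicate work is quantitative. One must (i) use the correct classical estimate $\Theta(L/(k'\log k'))$ for optimal paging on a random sequence, with both directions needed (the upper bound for the competitive ratio, the lower bound to compute $\eta/\textsc{opt}$); (ii) pin down the true next-arrival law of an active page precisely enough to locate $\eta$ at $\Theta(L\max(T,(k'+1)(k-k'+1)))$; and (iii) reconcile the concentration of $\eta$ and $\textsc{opt}$ with the Yao-principle bookkeeping, so that $\eta/\textsc{opt}$ is within a constant factor of $r$ on all but a negligible part of the support while the cost lower bound is untouched. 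These constraints together force, and are met by, the choice $k'=\Theta(\sqrt{r/(k\log k)})$ (capped at $k$), which is exactly what yields the bound $\Omega(\log\min(r/(k\log k),k))$.
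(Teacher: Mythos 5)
Your proposal is correct in outline and reaches the stated bound, but by a genuinely different construction from the paper's. The paper also argues via Yao's principle with predictions that are deterministic functions of time (hence information-free), but its hard distribution keeps all $k$ in-play pages random: over a universe of $k+t$ pages, each phase has $t$ clean and $k-t$ (randomly chosen) stale pages and consists of $3k\log k$ uniform requests over those $k$ pages followed by a deterministic sweep; predictions are simply ``the next time step,'' so the error telescopes to $O(k^2\log k)$ per phase \emph{deterministically}, and $\textsc{opt}=\Theta(nt)$ deterministically via the clean-element count, which makes the Yao bookkeeping trivial, while the delicate part is the cost lower bound (a conditional hit-probability computation over the random stale set plus a coupon-collector-type lemma). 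Your construction instead shrinks the effective universe to $k'+1$ active pages and pads each round with a filler drumbeat; this buys a much simpler cost argument (one uniform request per round, miss probability exactly $1/(k'+1)$ whenever all fillers are cached, and at least one filler miss otherwise, since evictions occur only on misses), at the price of (i) $\eta$ and $\textsc{opt}$ now being random variables, so the concentration bookkeeping you flag is genuinely needed where the paper needs none, and (ii) a quadratically lossier trade-off: you realize ratio $\Omega(\log k')$ at $\eta/\textsc{opt}\approx k'^2k\log k'$, whereas the paper realizes $\Omega(\log(k/t))$ at $\eta/\textsc{opt}\approx (k/t)\,k\log k$; the loss is a factor of $2$ inside the logarithm, so the final statement is unchanged, though the paper already attains $\Omega(\log k)$ at $\eta/\textsc{opt}\approx k^2\log k$ while your family needs $\approx k^3\log k$. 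Your $\Theta(L/(k'\log k'))$ estimates for the offline cost on the active subsequence and the gap/error computation are sound; note also that for the theorem as stated it suffices that $\eta/\textsc{opt}\le\epsilon$ on the instances used, so you could take $T=0$ and skip the fine-tuning of $T$.
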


To our knowledge, this is the only known lower bound. The upper bound and lower bound are asymptotically tight when $\eta/\textsc{opt} \leq k/\log k$ or $\eta/\textsc{opt} \geq k^{1+\epsilon}$. There is a still a significant gap; the two bounds are non-trivial on disjoint regimes, and in the regime $k \leq \eta/\textsc{opt} \leq k \log k$, neither bound is non-trivial.

Nonetheless, these results make progress towards determining the largest possible error bound that still admits a non-trivial competitive ratio. Where prior work only showed that $\eta/\textsc{opt} = o(\log^2 k)$ suffices to obtain a competitive ratio of $o(\log k)$, the above results imply that $\eta/\textsc{opt} = o(k)$ suffices and $\eta/\textsc{opt} \leq k^{1 + o(1)}$ is necessary.

\subsection{Roadmap}

In Section~\ref{section:prelim}, we formally describe the online caching model with machine learned advice, and define the predictor. We also review some facts from traditional caching algorithms (specifically, facts about marker-based algorithms) that we will rely on later in the paper.

Having defined the necessary terminology, we then outline in Section~\ref{section:technique} the techniques used to achieve our results.

In Section~\ref{section:algo1}, we provide the first of these results, a warm-up algorithm for caching with predictions. This algorithm improves upon the prior work, and is simpler than our final algorithm (and thus may have some practical advantages). Furthermore, as a marker-based algorithm, this first algorithm is somewhat simpler to analyze.

In Section~\ref{section:algo2}, we describe our second and final algorithm, departing from the marker-based framework to achieve an improvement in the competitive ratio.

Finally, in Section~\ref{section:lbound}, we prove the lower bound.

\section{Preliminaries}\label{section:prelim}

\subsection{Traditional caching}

In the traditional online caching problem, the input is a sequence $\sigma = (z_1,z_2,\dots,z_n)$ of elements which become available one by one. The cache has fixed size $k$ and is initially empty. As elements arrive, if an element is not present in the cache, then it counts as a ``cache miss'', and the algorithm must add it to the cache, and choose which cache element to evict. Otherwise nothing happens. The cost $\text{cost}_A(\sigma)$ of the algorithm $A$ on the input $\sigma$ is the number of cache misses. If the algorithm is randomized, this cost is the expected number of cache misses.

We define $\textsc{opt}(\sigma)$ to be the minimum number of caches misses achievable by an ``offline'' algorithm---an algorithm which is given $\sigma$ in advance. Our online algorithm $A$ is $\alpha$-competitive if there is some constant $c$ such that for every input $\sigma$, $$\text{cost}_A(\sigma) \leq \alpha \cdot \textsc{opt}(\sigma) + c.$$

It is known that there is a $k$-competitive deterministic algorithm and an $O(\log k)$-competitive randomized algorithm. Furthermore, these ratios are optimal.

\subsection{ML advice}

In this paper we consider not the traditional model but rather an extension of it, in which our algorithm is also given some advice \cite{LV2018}. In particular, when input element $z_i$ arrives, an oracle gives the algorithm $h_i(z_i)$, which is an estimate of $y_i = \min_{j > i} \{j: z_j = z_i\}$, the next time when element $z_i$ will appear (if $z_i$ never appears again, and the input sequence has length $n$, then we set $y_i = n+1$). These estimates may not be correct, and we want to bound the performance of our algorithm as a function of the error. We define the error as the $\ell_1$ distance between the real and predicted next arrivals: for each input element $z_i$ we define $\text{Err}_i(z_i) = |h_i(z_i) - y_i|$, and then define $$\eta = \sum_{i=1}^n \text{Err}_i(z_i).$$

For any time $i$ and input element $w$, we also define $L(w,i)$ to be the last time $j < i$ such that $z_j = w$. 


When analyzing an algorithm $A$ in this model, the goal is to bound its competitive ratio as a function of $\eta/\textsc{opt}$: more precisely, to show for a desired function $\alpha$ and a constant $c$, that $\text{cost}_A(\sigma) \leq \alpha(\eta/\textsc{opt}) \cdot \textsc{opt}(\sigma) + c$ for every input $\sigma$. This is the approach taken in prior work (see Theorem~\ref{theorem:lv}), and we will see how $\eta/\textsc{opt}$ arises naturally in our algorithms' analyses. It would not make sense for the competitive ratio to be a function of the absolute error $\eta$, since duplicating the input sequence would double $\eta$ but leave the ratio $(\text{cost}_A(\sigma)/\textsc{opt}(\sigma))$ approximately unchanged.

\subsection{Marker-based algorithms}

Our first predictive caching algorithm will be a \emph{marker-based algorithm} which judiciously incorporates the oracle's advice; our second algorithm will depart from but still rely heavily on the marker-based framework. Marker-based caching algorithms have the following structure. The execution of the algorithm comes in phases, and at the beginning of each phase, every cache element is said to be \emph{unmarked}. When a cache hit occurs, the corresponding element is \emph{marked}. When a cache miss occurs, some unmarked element is evicted from the cache, and the new element is inserted in its place, and immediately marked. If all elements of the cache are marked and another cache miss occurs, then the whole cache is unmarked and a new phase begins.

The decision that a marker-based algorithm has to make is which unmarked element to evict in the event of a cache miss. In the traditional online model, the randomized marker algorithm achieves an $O(\log k)$-competitive ratio by evicting a random unmarked element. Additionally, any marker-based algorithm is $k$-competitive.

For any marker-based algorithm, we can make the following definition.

\begin{definition}
An input element is called \emph{clean} for some phase $r$ of the algorithm execution if it appeared in phase $r$ but did not appear in phase $r-1$. If it appeared in both phase $r$ and phase $r-1$, it is called \emph{stale}.
\end{definition}

The following lemma is known, relating the number of clean elements to the optimal offline cost.

\begin{lemma}\cite{Fiat1991}\label{lemma:fiat}
Let $L$ be the number of clean elements in an execution of a marker-based algorithm on some input $\sigma$. Then $L/2 \leq \textsc{opt}(\sigma) \leq L$.
\end{lemma}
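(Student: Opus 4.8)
The plan is to prove the two inequalities separately, exploiting that the partition of $\sigma$ into phases depends only on $\sigma$: a phase ends exactly when the $(k+1)$-st distinct element of the phase is requested (this is independent of the algorithm's random eviction choices), so each phase is a maximal contiguous block of requests touching at most $k$ distinct elements, and an offline algorithm can reconstruct the phase boundaries from $\sigma$. Write $R$ for the number of phases, $P_i$ for the set of distinct elements requested in phase $i$ (with $P_0 = \emptyset$), and $c_i = |P_i \setminus P_{i-1}|$ for the number of clean elements of phase $i$, so $L = \sum_{i=1}^R c_i$. Note that $|P_i| = k$ for $i < R$ while $|P_R| \le k$, that the stale elements of phase $i$ are exactly $P_{i-1}\cap P_i$, and hence that $|P_i \setminus P_{i+1}| = |P_i| - |P_{i+1}| + c_{i+1} \ge c_{i+1}$ whenever $i < R$.

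For the upper bound $\opt(\sigma) \le L$ I would exhibit an offline algorithm $B$ of cost exactly $L$. It maintains the invariant that at the boundary between phases $i$ and $i+1$ its cache equals $P_i$. Inside phase $i+1$, $B$ never evicts an element of $P_i \cap P_{i+1}$, so every request to a stale element is a hit; and on the first request to each clean element --- the only requests on which $B$ misses --- it evicts some element of $P_i \setminus P_{i+1}$, which is never requested during phase $i+1$ so the eviction is harmless. Since $B$ only ever evicts from $P_i \setminus P_{i+1}$, and there are $|P_i \setminus P_{i+1}| \ge c_{i+1}$ such elements in the cache against only $c_{i+1}$ evictions to perform, one is always available; after phase $i+1$ the cache is $(P_i \cap P_{i+1}) \cup (P_{i+1}\setminus P_i) = P_{i+1}$, so the invariant persists and phase $i+1$ contributed exactly $c_{i+1}$ misses. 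The base case is phase $1$: starting from the empty cache, $B$ simply loads the $c_1 = |P_1|$ elements of $P_1$. Summing, $\cost_B(\sigma) = \sum_i c_i = L$, hence $\opt(\sigma) \le L$.

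For the lower bound $\opt(\sigma) \ge L/2$, let $\opt_i$ be the number of misses the optimal offline algorithm incurs among the requests of phase $i$, so $\sum_i \opt_i = \opt(\sigma)$. The key counting estimates are: (a) for $2 \le i \le R$, $\opt_{i-1} + \opt_i \ge c_i$, since phases $i-1$ and $i$ together touch $|P_{i-1} \cup P_i| = k + c_i$ distinct elements while the offline cache holds at most $k$ elements when phase $i-1$ begins, forcing at least $c_i$ of them to be loaded (at distinct requests) within this window; and (b) $\opt_1 + \opt_2 \ge c_1 + c_2$ when $R \ge 2$, by the same argument but using that the cache is empty at the start of phase $1$, so all $|P_1 \cup P_2| = c_1 + c_2$ distinct elements are loaded. (If $R \le 1$ then $\opt = \opt_1 \ge |P_1| = L$ and we are done.) Now sum (a)/(b) along the two interleaved pairings $\{(1,2),(3,4),\dots\}$ and $\{(2,3),(4,5),\dots\}$ of consecutive phases: the first yields $\opt \ge (c_1 + c_2) + c_4 + c_6 + \cdots$, the second yields $\opt \ge c_3 + c_5 + \cdots$, and adding them gives $2\opt \ge \sum_{i=1}^R c_i = L$; a short parity check confirms every $c_i$ is counted once, the lone unpaired final phase in one grouping contributing only the harmless $\opt_i \ge 0$.

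The upper bound is routine once the offline algorithm and its cache invariant are set up. The one genuinely delicate point is getting the constant $2$ rather than $3$ in the lower bound: using only the uniform estimate $\opt_{i-1} + \opt_i \ge c_i$ for all $i$ together with the trivial $\opt_1 \ge c_1$, summing the two interleaved pairings yields merely $\opt \ge L/3$. The main idea is that the very first phase instead admits the strictly stronger bound $\opt_1 + \opt_2 \ge c_1 + c_2$ because the cache starts empty, and this precisely covers the deficit. Beyond that one only needs to handle the boundary cases --- an incomplete final phase (fine, since we only ever use $|P_i| = k$ for $i < R$) and $R \le 1$ --- neither of which causes trouble.
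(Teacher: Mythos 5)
Your proof is correct. Note that the paper does not prove this lemma at all --- it is quoted from Fiat et al.\ \cite{Fiat1991} --- so there is nothing internal to compare against; your argument is essentially the classical one: an explicit offline strategy that misses only on clean elements (evicting only from $P_i\setminus P_{i+1}$, which is large enough since full phases have exactly $k$ distinct elements) gives $\opt(\sigma)\le L$, and the counting bound $\opt_{i-1}+\opt_i\ge c_i$ from the $k+c_i$ distinct elements spanning two consecutive phases, summed over the two interleaved pairings with the strengthened first-phase bound $\opt_1+\opt_2\ge c_1+c_2$ (cache initially empty), gives $\opt(\sigma)\ge L/2$. Your bookkeeping of the boundary cases (partial last phase, $R\le 1$, and the parity of the pairings) checks out, so the constant $2$ is obtained correctly.
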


The phases (and consequently, the clean/stale elements) are in fact independent of the exact algorithm.

\begin{definition}
An \emph{arrival} in phase $r$ is an element $z_i$ which has not previously appeared in the same phase.
\end{definition}

Then the following fact can be readily derived for any marker-based algorithm:

\begin{claim}
Every phase contains exactly $k$ arrivals.
\end{claim}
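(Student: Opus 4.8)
The plan is to prove, by induction on the number of arrivals so far within a phase, a structural invariant describing the cache contents, and then read off the count. The invariant is that partway through any phase the set of elements that have already arrived in the phase is \emph{exactly} the set of currently-marked cache elements, and that the remaining cache slots are occupied by unmarked elements carried over from the previous phase. Given this, the bound is immediate: a phase ends precisely when the cache holds $k$ marked elements and a fresh element arrives.

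Concretely, fix a phase and let $A_j$ be the set of distinct elements that have arrived during the phase after the $j$-th arrival. I would show by induction on $j$ that, so long as $j \le k$ and the phase has not yet ended, (i) $|A_j| = j$ (immediate from the definition of an arrival), (ii) every element of $A_j$ is in the cache and marked, (iii) these are the only marked cache elements, and (iv) the other $k-j$ cache slots hold unmarked elements (leftovers from the previous phase; for the very first phase these slots are simply empty, which does not affect anything). The base case $j=0$ holds because every phase begins with all cache elements unmarked. For the step, the $(j{+}1)$-st arrival is by definition an element $z \notin A_j$, so by the invariant $z$ is not currently marked: if $z$ is in the cache it must be one of the unmarked leftovers, and a cache hit marks it; if $z$ is not in the cache, a cache miss occurs, and since $j < k$ there is an unmarked element available for eviction, after which $z$ is inserted and immediately marked. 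Either way the invariant is restored with $j$ replaced by $j+1$.

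Pushing the induction to $j = k$, the cache now consists exactly of the $k$ marked elements of $A_k$, with no unmarked slots remaining. The next arrival is an element $z \notin A_k$, hence not in the cache, so a cache miss occurs; with no unmarked element left to evict, a new phase begins and $z$ counts as an arrival of that phase, not the current one. Hence the current phase contains exactly $k$ arrivals. (If the input ends mid-phase, the final phase may have fewer than $k$ arrivals; as usual this boundary case is ignored, or equivalently one pads the input.)

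I do not anticipate a real obstacle, since the statement is essentially bookkeeping about the marker discipline. The only point that needs a little care is the case split in the inductive step — cache hit on an unmarked leftover versus cache miss — together with the observation that while $j < k$ there is always an unmarked element to evict, so no premature phase boundary can occur. This is exactly what forces each phase to have $k$, rather than fewer, arrivals.
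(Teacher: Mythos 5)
Your proof is correct, and it is essentially the argument the paper has in mind: the paper leaves this claim as an easy remark (noting only that a phase ``continues as long as possible without containing $k+1$ distinct elements''), and your invariant---that the marked cache elements are exactly the elements that have arrived so far in the phase, so the phase ends precisely when a $(k+1)$-st distinct element would arrive---is just a careful formalization of that same observation. No gaps beyond the boundary cases you already note (the first phase's initially empty cache and a possibly truncated final phase).
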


More specifically, each phase continues as long as possible without containing $k+1$ distinct elements.

\subsection{Eviction chains}

To design and analyze marker-based algorithms, it is useful to decompose the set of cache misses into \emph{eviction chains}, a concept perhaps first explicitly utilized in \cite{LV2018}.

For any marker-based algorithm and any phase, each clean arrival in the phase causes a cache miss. This yields a chain of evictions in that phase which can be blamed on that clean arrival: the clean element's arrival evicts some element, whose next appearance evicts another element, and so forth until an element is evicted which never reappears in the phase. Each element in the chain must be an arrival, since elements which have previously appeared were marked and are therefore immune to eviction for the remainder of the phase. Thus, each element in the chain after the first clean element must be stale, since to be evicted it must have been present in the cache.

These clean-element chains account for all cache misses in the phase, since every stale element was in the cache at the start of the phase, so for it to cause a cache miss it must have been evicted by a previous element. So the total number of cache misses in a phase is simply the total length of the eviction chains.

\section{Our techniques and related work}\label{section:technique}

Given next-arrival predictions, an algorithmically naive approach is to trust the predictions completely. The optimal offline algorithm evicts, at each cache miss, the cache element with the latest next-arrival time. Thus, if the predictions are perfect then this approach will have a competitive ratio of $1$. However, even small errors in the predictions can lead to an unbounded prediction. So it is necessary to balance trusting the predictions with making provably competitive decisions.

Our work builds on \cite{LV2018}, which proposed a marker-based algorithm for caching with predictions. Their algorithm utilizes eviction chains. Since eviction chains partition the set of cache misses, and the number of eviction chains is equal to the number of clean arrivals, which is asymptotically equal to $\textsc{opt}$, bounding the average chain length bounds the competitive ratio of the algorithm. Thus, algorithms which work with each chain independently can often be cleanly analyzed. In \cite{LV2018}, this approach is carried out: for each eviction chain, the predictions are trusted (to choose which element to evict next) until the chain reaches length $\Omega(\log k)$, after which evictions are random. Each chain's length can be bounded by the prediction error of elements in that chain. Since the chains are disjoint, this implies a bound on the total cost of the algorithm by the total prediction error (see Theorem~\ref{theorem:lv}).

As a warm-up, we first show that a small modification to the algorithm from \cite{LV2018} improves the competitive ratio from $O(1+\min(\sqrt{\eta/\textsc{opt}}, \log k))$ to $O(1+\min(\log (\eta/\textsc{opt}), \log k))$. The modification is simple---trust the predictions only \emph{once} in each chain---but the analysis requires more care. Unlike before, the length of each chain now depends on the next-arrival prediction errors of elements which may or may not appear in the chain. Thus, adding up the errors could double-count. To avoid this issue, we do not directly bound each chain's length by prediction error. Instead, we charge the length of each chain against a set of inversions in the order of element arrivals relative to the predictions. These sets of inversions are disjoint for different chains, so summing across chains does not double count. A combinatorial lemma then relates the total number of inversions to the total error.

To improve the competitive ratio further and achieve the bound stated in Theorem~\ref{theorem:nonmarker}, we depart from the marker-based framework. In our previous algorithm, trusting the prediction once meant evicting the unmarked element $e$ with latest predicted next arrival. Intuitively, if the prediction error ``per chain'' is $O(\eta/\textsc{opt})$, then if $e$ reappears in the same phase, it should on average be at most $O(\eta/\textsc{opt})$ elements away from the end of the phase. Our first algorithm would then proceed by random eviction of unmarked elements, so $e$'s eviction chain would have length $O(\log (\eta/\textsc{opt}))$ in expectation (for the same reason that the traditional marker algorithm has competitive ratio $O(\log k)$).

However, if $\eta/\textsc{opt} \ll k$, then a uniformly random cache element (potentially marked) will probably not appear after $e$, so evicting it might terminate the eviction chain at length $O(1)$ instead of $O(\log (\eta/\textsc{opt}))$. This is the motivation for our final algorithm. However, it relies on evicting marked cache elements, which significantly complicates the analysis, since facts about marker-based algorithms no longer directly apply.

In particular, the number of eviction chains may no longer be $\Theta(\textsc{opt})$; evicting a marked element may cause an ``extra'' eviction chain in the next phase. To deal with this issue, our key tool is a bijection from eviction chains to special elements that began the phase in the cache but never appeared. With this bijection, we show that every extra chain can be charged against either prediction error or a chain which \emph{does not} exist but could have. An added complication is showing that these prediction errors are disjoint.

For the lower bound, the idea is to choose an input distribution and predictions such that the predictions give no information about the future input, but have reasonably small error. Generalizing the traditional lower bound against randomized caching algorithms---where each input element is uniformly distributed over $k+1$ pages, so each phase has one clean element---our input distribution consists of phases each with $t$ clean elements, where $t$ is a variable parameter of the distribution. Increasing the parameter $t$ allows smaller prediction error, at the cost of a smaller bound on the competitive ratio.

\section{Marker-based predictive algorithm}\label{section:algo1}

Our first algorithm \textsc{lmarker} is a modification of the algorithm proposed in \cite{LV2018}, which is itself a balance between two paradigms: trust the oracle, or ignore the oracle. A marker-based algorithm which trusts the oracle would always evict the cache element with highest predicted next arrival; if the oracle had zero error, this algorithm would match the optimal offline algorithm. A marker-based algorithm which ignores the oracle is the random marker algorithm. To combine the gains of the former algorithm when the oracle has low error with the robustness of the latter algorithm when the oracle has high error, the algorithm from \cite{LV2018} uses the following strategy: for each eviction chain, trust the oracle until the chain becomes long, and subsequently ignore the oracle.

The modification we make is simple: trust the oracle less---only once at the beginning of each eviction chain. Intuitively, a total prediction error of $\eta$ translates to an error of $\Theta(\eta/\textsc{opt})$ in each of the $\Theta(\textsc{opt})$ chains. Evicting the unmarked element with highest predicted next arrival means that its true next arrival should be only $O(\eta/\textsc{opt})$ from the end of the phase, resulting in $O(\log \eta/\textsc{opt})$ more cache misses for that eviction chain.

We now describe \textsc{lmarker} in more detail, and formalize the analysis. The chains cannot quite be analyzed independently (unlike in \cite{LV2018}), so care is needed.

\paragraph{Algorithm description} \textsc{lmarker} is a marker-based algorithm with the following eviction strategy upon a cache miss: if the incoming element is clean, then evict the unmarked element with the highest predicted arrival time. If the incoming element is stale, then evict a random unmarked element.

\paragraph{Algorithm analysis} Fix a phase. Exactly $k$ distinct elements arrive during the phase; let $i_1,\dots,i_k$ be the times of the first arrivals for the phase. These are the only times when cache misses may occur, and the only times when an unmarked element becomes marked.

Consider a single clean element with arrival time $i_t$. It evicts the unmarked element with the highest predicted arrival time, which must be either (1) a stale element with arrival time in the set $\{i_{t+1},\dots,i_k\}$ or (2) an element which does not appear in this phase. Case (2) results in no more cache misses along this chain, so we analyze case (1): some stale element is evicted. Let $i_{e(t)}$ be the time at which it arrives.

For any $1 \leq a \leq b \leq k$, define $N_a(b)$ to be the number of stale elements which are unmarked and in the cache at time $i_a$, and have arrival times after $i_b$. Let $\mathcal{E}_t$ be the distribution over executions of the algorithm up to time $t$.

\begin{claim}\label{claim:randomunmarked}
If the clean element arriving at time $i_t$ evicts the stale element arriving at time $i_{e(t)}$, then the expected length of the eviction chain begun at $i_t$ is at most $O(\mathbb{E}[1 + \log(N_{t}(e(t)))])$.
\end{claim}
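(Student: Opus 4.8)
The plan is to analyze the eviction chain begun at $i_t$ by coupling it to the random marker algorithm's behavior on the set of ``relevant'' stale elements—those that could plausibly be hit by this chain. Recall that after the clean element evicts the stale element arriving at $i_{e(t)}$, every subsequent eviction in the chain is a random unmarked element (since stale incoming elements trigger random eviction). The key observation is that each element on the chain after the first has arrival time strictly later than the previous one: when the element arriving at $i_{e(t)}$ reappears, it evicts a random unmarked element, which (if it reappears in the phase) must be a stale element not yet arrived, hence with arrival time in $\{i_{e(t)+1},\dots,i_k\}$; and so on. So the chain, restricted to elements that continue it, walks strictly forward through arrival times, and at each step it must ``land'' on an element that is still unmarked and still in the cache at the moment of eviction, and that element must arrive later than the current position for the chain to continue past that point.

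Next I would set up the potential-function/expectation argument. Condition on the execution up to time $t$ (so $N_t(e(t))$ is determined). When the stale element arriving at $i_{e(t)}$ causes a cache miss at its reappearance, it evicts a uniformly random unmarked element; among the unmarked elements in the cache at that moment, the number that could continue the chain (i.e.\ that are stale and arrive after $i_{e(t)}$) is at most $N_t(e(t))$, and in fact at most the current count of such elements, which only decreases as the phase progresses (elements get marked or evicted). Following the standard analysis of the random marker algorithm: if there are currently $m$ unmarked cache elements and at most $j \le m$ of them are ``bad'' (would continue the chain), the probability the chain continues is at most $j/m \le j/(\text{something} \ge j)$; more usefully, once we are at chain position with $j$ candidate continuations remaining, the expected number of further steps is bounded by a harmonic-type sum $H(j) = O(\log j)$. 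Formally I would prove by induction on $j = N_t(e(t))$ that the expected remaining chain length, given the chain has just evicted a stale element with $j$ still-eligible later-arriving unmarked stale elements present, is at most $1 + H(j)$, using that the newly evicted element is uniform among the unmarked elements and that the number of unmarked elements is always at least the number of eligible ones. Taking expectations over the execution up to time $t$ then yields the $O(\mathbb{E}[1 + \log N_t(e(t))])$ bound.

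The main obstacle I anticipate is handling the fact that ``eligible'' elements can leave the eligible set for reasons other than being evicted by this chain—namely, they can be marked (by appearing on their own, as a cache hit) or evicted by a \emph{different} eviction chain in the same phase running concurrently. This is precisely why the chains are not independent and why the claim is stated with an expectation over $\mathcal{E}_t$ and in terms of $N_t(e(t))$ rather than a deterministic quantity. The resolution is that such departures only \emph{help}: I would argue that the worst case for the chain length is when no eligible element ever leaves except via this chain's own evictions, so the bound for the ``isolated'' process dominates. Making this domination rigorous requires a careful coupling—e.g.\ arguing that at each step, the conditional probability of continuing, given the full history, is at most $j/m$ where $m$ is the number of unmarked elements and $j \le N_t(e(t))$ the number of eligible ones, and that $m$ never drops below $j$ so this ratio is controlled—and then bounding the resulting (history-dependent) stopping time by the harmonic sum. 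A secondary subtlety is that $N_t(e(t))$ itself is a random variable depending on which stale element the clean arrival evicted and on the random choices before time $t$; this is why the final bound carries the outer expectation, and I would simply note that once we condition on the prefix execution, $e(t)$ and $N_t(e(t))$ are fixed and the above argument applies pointwise, then average.
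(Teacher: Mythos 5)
Your proposal is correct and takes essentially the same route as the paper's proof: condition on the execution up to time $t$, use that each subsequent eviction is uniform over the unmarked elements of which at most $N_t(e(t))$ (ordered by arrival time) can continue the chain, bound the expected chain length by the harmonic-type recurrence $R_m = 1 + \frac{1}{m}\sum_{j=1}^m R_{m-j} = O(\log m)$, and conclude by the law of total expectation. The domination step you single out as the main obstacle (eligible elements leaving only helps) is exactly what the paper compresses into its ``in the worst case, each stale element evicts another stale element'' sentence, together with the observation that at most $m-j$ unmarked elements remain when the $j$-th eligible element arrives.
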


\begin{proof}
Conditioning on $\mathcal{E}_t$, the random variable $N_{t}(e(t))$ is determined. In the worst case, each stale element evicts another stale element until this is no longer possible. There are at most $N_{t}(e(t))$ unmarked stale cache elements at time $i_{e(t)}$. Say that there are $m$ such elements, and order them $1,\dots,m$ by arrival time. If $z_{i_{e(t)}}$ evicts the $j$-th such element then there will be at most $m-j$ unmarked cache elements when that element arrives. But $j$ is uniformly distributed over $\{1, 2, \dots, m\}$. Thus, the expected length of the chain, conditioned on $\mathcal{E}_t$, is bounded by $R_{N_{t}(e(t))}$ defined by the recurrence $$R_m = 1 + \frac{1}{m} \sum_{j=1}^m R_{m-j},$$ which solves to $R_m = O(\log m)$. By the law of total expectation, the unconditional expected length of the chain is bounded by $O(\mathbb{E}[1 +  \log(N_t(e(t)))])$.
\end{proof}

Next, fix any execution of the algorithm on the entire input. We relate $N_t(e(t))$ to prediction error; more specifically, we relate it to the number of inversions in the predicted arrival order of stale elements in phase $r$. For each stale arrival time $i_s$, let $j_s$ be the most recent appearance of $z_{i_s}$ in phase $r-1$. Let $J = \{j_s | z_{i_s}\text{ is stale in phase $r$}\}$ be the set of most recent appearances.

If $z_{i_{e(t)}}$ is evicted by $z_{i_t}$, then at time $i_t$, all unmarked stale elements $w$ in the cache with arrivals in the set $\{i_{e(t)+1},\dots,i_k\}$ had earlier predicted next arrivals than $z_{i_{e(t)}}$. Thus, the number of $u > e(t)$ for which $z_{i_u}$ is stale and $h_{j_{e(t)}}(z_{j_{e(t)}}) \geq h_{j_u}(z_{j_u})$ is at least $N_t(e(t))$.

Define $N = \sum N_t(e(t))$, summing over all clean arrival times $i_t$. Then the sequence of predicted stale arrivals $(h_j(z_j))_{j \in J}$ has at least $N$ inversions with respect to the strictly increasing integer sequence of actual stale arrivals. It follows from Lemma~\ref{lemma:combo}, which we state and prove in the next section, that $$\sum_{j_s \in J} |h_{j_s}(z_{j_s}) - i_s| \geq N/2.$$

Let $\eta_{r-1}$ be the sum of prediction errors over all predictions made in phase $r-1$. Since $J$ is a set of times in phase $r-1$, we get from the above inequality that $\eta_{r-1} \geq N/2$. On the other hand, by Claim~\ref{claim:randomunmarked}, the expected number of cache misses in phase $r$ is at most $\mathbb{E}\left[\sum_t O(1 + \log N_t(e(t)))\right]$, summing over all $t$ such that $i_t$ is a clean arrival time. 

To simplify notation a bit, for each clean arrival time $c = i_t$, define $N_c = N_t(e(t))$. Then we have shown that $2\eta_{r-1} \geq \sum_{c \in r} N_c$, where the sum is over clean arrivals $c$ in phase $r$. Furthermore, we have shown that $$\mathbb{E}[\text{\# of cache misses in phase $r$}] \leq \sum_{c \in r} O(\mathbb{E} [1 + \log(N_c)])  \leq \sum_{c \in r} O(1 + \log(\mathbb{E} [N_c]))$$ by Jensen's inequality.

Now sum over all phases. The number of cache misses is $O(L + \sum_c \log(\mathbb{E} [N_c]))$, where the sum is over the $L$ clean arrivals in all phases. And we know that $2 \eta \geq \sum_c N_c$ for all executions, implying that $2\eta \geq \sum_c \mathbb{E} [N_c]$. By another application of Jensen's inequality, the expected number of cache misses can be bounded by $O(L + L\log(2\eta/L))$.

Since we also know that $N_c \leq k$ for any clean arrival $c$, we can alternately bound the number of cache misses by $O(L + L \log k)$.

Combining the two bounds and using the fact that $L = \Theta(\textsc{opt})$, we have proven the following theorem.

\begin{theorem}
The algorithm \textsc{lmarker} has competitive ratio $O(1 + \min(\log (\eta/\textsc{opt}), \log k))$, where $\eta$ is the unknown $\ell_1$ prediction error.
\end{theorem}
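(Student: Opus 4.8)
The plan is to assemble the pieces already developed in the analysis above into a single inequality and then intersect it with the trivial $k$-bound. First I would recall the structural facts: cache misses partition into eviction chains (one per clean arrival), so the expected number of cache misses equals the expected total chain length, and the total number $L$ of clean arrivals satisfies $L=\Theta(\textsc{opt})$ by Lemma~\ref{lemma:fiat}. For a clean arrival $c=i_t$, if $c$ evicts an element that never reappears in the phase the chain has length $1$; otherwise Claim~\ref{claim:randomunmarked} gives that its chain has expected length $O(\mathbb{E}[1+\log N_c])$, where $N_c=N_t(e(t))$ (reading $\log$ as $\log(\,\cdot\,+2)$ so this is never vacuous). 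Summing over the clean arrivals of a fixed phase $r$ and pulling the expectation inside the concave $\log$ via Jensen, the expected number of cache misses in phase $r$ is $O\bigl(\sum_{c\in r}(1+\log \mathbb{E}[N_c])\bigr)$.

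Next I would feed in the error accounting. For every execution, the predicted stale-arrival sequence $(h_j(z_j))_{j\in J}$ of phase $r-1$ has at least $\sum_{c\in r}N_c$ inversions against the (strictly increasing) true stale-arrival times, so Lemma~\ref{lemma:combo} gives $\sum_{c\in r}N_c\le 2\eta_{r-1}$ pointwise, hence $\sum_{c\in r}\mathbb{E}[N_c]\le 2\eta_{r-1}$. Summing over all phases, $\sum_c\mathbb{E}[N_c]\le 2\sum_r\eta_{r-1}\le 2\eta$, and a second application of Jensen (concavity of $\log$, now with $L$ summands) bounds the expected number of cache misses by $O(L+L\log(2\eta/L))$. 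Separately, every phase has exactly $k$ arrivals, so $N_c\le k$ always, and the same per-chain bound yields $O(L+L\log k)$. Taking the minimum of the two bounds and substituting $L=\Theta(\textsc{opt})$ (so $\log(2\eta/L)=O(1+\log(\eta/\textsc{opt}))$ by monotonicity) gives $\cost_{\textsc{lmarker}}(\sigma)=O\bigl(\textsc{opt}(\sigma)\cdot(1+\min(\log(\eta/\textsc{opt}),\log k))\bigr)$, which is the stated competitive ratio; the additive constant in the definition of competitiveness absorbs the small-$\textsc{opt}$ edge cases, and when $\eta/\textsc{opt}\le 1$ the ``$1+$'' term dominates so the ratio is $O(1)$.

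The one genuinely delicate point---already flagged in the text---is that chains within a single phase are not independent: $N_c$ depends on which stale elements are still unmarked and unevicted when chain $c$ runs, which earlier chains affect. This is handled in two ways. First, Claim~\ref{claim:randomunmarked} is stated conditionally on the execution prefix $\mathcal{E}_t$, so that within one chain $N_c$ is a fixed number and the recurrence $R_m=1+\frac{1}{m}\sum_{j=1}^m R_{m-j}$ applies cleanly; the two uses of Jensen's inequality then move from the conditional to the unconditional bound. Second, we do \emph{not} charge each $N_c$ to ``the error of chain $c$'' (which would double-count, since one stale element can matter for several chains); instead, the per-chain inversion sets all have distinct left endpoints $j_{e(t)}$, hence are disjoint, so $\sum_{c\in r}N_c$ is at most the total inversion count of the phase-$(r-1)$ predicted-arrival sequence, and since each prediction lives in exactly one phase, $\sum_r\eta_{r-1}\le\eta$. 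Everything else is bookkeeping with Jensen and Lemma~\ref{lemma:fiat}.
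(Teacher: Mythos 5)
Your proposal is correct and follows essentially the same route as the paper: per-chain expected length via Claim~\ref{claim:randomunmarked}, charging $\sum_{c\in r} N_c$ to inversions of the phase-$(r-1)$ predicted stale-arrival sequence and invoking Lemma~\ref{lemma:combo}, two applications of Jensen's inequality, the trivial $N_c\le k$ bound, and Lemma~\ref{lemma:fiat} to convert $L$ into $\textsc{opt}$. Your added remarks on the disjointness of the inversion sets (distinct evicted elements $e(t)$ per chain) and on conditioning in Claim~\ref{claim:randomunmarked} accurately capture the points the paper handles implicitly.
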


Let $H(m)$ be the $m$-th harmonic number. We can keep track of the exact constants in the proof, rather than using asymptotic notation:

\begin{theorem}
The algorithm \textsc{lmarker} has competitive ratio $4 + 2H(\min(2\eta/\textsc{opt}, k))$, where $\eta$ is the unknown $\ell_1$ prediction error.
\end{theorem}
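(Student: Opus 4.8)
The plan is to reprove the previous theorem while tracking constants, so the bulk of the work is bookkeeping rather than new ideas. First I would redo the chain-length bound from Claim~\ref{claim:randomunmarked} with an explicit constant: the recurrence $R_m = 1 + \frac{1}{m}\sum_{j=1}^m R_{m-j}$ with $R_0 = 0$ (an empty chain when the evicted element never reappears) telescopes to $R_m = 1 + H(m)$ for $m \ge 1$, since subtracting consecutive instances gives $mR_m - (m-1)R_{m-1} = R_{m-1} + 1$, i.e. $R_m = R_{m-1} + 1/m$. This is the one place a genuine constant ($1 + H(m)$) must be pinned down rather than absorbed into $O(\cdot)$. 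So conditioned on $\mathcal{E}_t$, the expected length of the chain begun at a clean arrival $c$ is at most $1 + H(N_c)$, and taking expectations and applying Jensen as in the original argument, the expected number of cache misses in phase $r$ is at most $\sum_{c \in r}(1 + H(\mathbb{E}[N_c]))$.

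Next I would carry through the summation exactly as before. Summing over phases, the expected number of cache misses is at most $L + \sum_c (1 + H(\mathbb{E}[N_c])) \le 2L + \sum_c H(\mathbb{E}[N_c])$, where the sum ranges over the $L$ clean arrivals and I used that there are $L$ clean arrivals in total. Since $H$ is concave, $\sum_c H(\mathbb{E}[N_c]) \le L \cdot H\left(\frac{1}{L}\sum_c \mathbb{E}[N_c]\right)$, and from the inversions/combinatorial step we have $\sum_c \mathbb{E}[N_c] \le 2\eta$ (taking expectations of the pointwise bound $\sum_c N_c \le 2\eta$), so this is at most $L \cdot H(2\eta/L)$. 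Alternatively, using $N_c \le k$ pointwise gives the bound $L \cdot H(k)$. Taking the minimum, the expected number of cache misses is at most $2L + L\,H(\min(2\eta/L, k))$.

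Finally I would convert to a competitive ratio using Lemma~\ref{lemma:fiat}. That lemma gives $L/2 \le \textsc{opt} \le L$, so $L \le 2\,\textsc{opt}$, hence the cost is at most $4\,\textsc{opt} + 2\,\textsc{opt}\cdot H(\min(2\eta/L, k))$; and since $L \ge \textsc{opt}$ we have $2\eta/L \le 2\eta/\textsc{opt}$, and $H$ is increasing, so $H(\min(2\eta/L, k)) \le H(\min(2\eta/\textsc{opt}, k))$. This yields a competitive ratio of $4 + 2H(\min(2\eta/\textsc{opt}, k))$, matching the statement. One mild care point: $H$ is defined on integers, so for the non-integer argument $2\eta/\textsc{opt}$ I would either extend $H$ to reals by interpolation (consistent with concavity) or replace it by $H(\lceil \cdot \rceil)$, noting this only affects the additive constant. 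I do not expect any real obstacle here; the only spot demanding attention is getting the recurrence solution $R_m = 1 + H(m)$ and the factor-of-2 from Lemma~\ref{lemma:fiat} exactly right, since those are what determine the final constants $4$ and $2$.
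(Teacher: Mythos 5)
Your proposal is correct and follows essentially the same route as the paper, which obtains this theorem simply by tracking the constants in the preceding \textsc{lmarker} analysis (per-chain bound via the harmonic recurrence, the inversion bound $\sum_c N_c \le 2\eta$, Jensen/concavity of $H$, and $L/2 \le \opt \le L$). One small bookkeeping slip: with $R_0=0$ the recurrence solves to $R_m=H(m)$, not $1+H(m)$ (the value $1+H(m)$ corresponds to $R_0=1$, i.e.\ counting the miss of the evicted element itself, and the full chain including the clean miss costs at most $2+H(N_c)$), but the extra $L$ you add when summing over phases exactly compensates, so your final bound $2L + L\,H(\min(2\eta/L,k))$ and the stated ratio $4+2H(\min(2\eta/\opt,k))$ are correct.
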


\subsection{Proof of combinatorial lemma}

Let $M = (m_1,\dots,m_n)$ be a strictly increasing integer sequence. For any integer sequence $A = (a_1,\dots,a_n)$ let $\inv(A)$ be the number of pairs of indices $i<j$ such that $a_i \geq a_j$. Let $\cost(A) = \sum_{i=1}^n |a_i - m_i|$, and define $\Delta(A) = 2\cost(A) - \inv(A)$.

\begin{lemma}\label{lemma:combo}
Let $A = (a_1,\dots,a_n)$ be an arbitrary integer sequence. Then $\inv(A) \leq 2\cost(A)$, with $\inv(A)$ and $\cost(A)$ as defined above.
\end{lemma}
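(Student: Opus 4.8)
The plan is to prove the stated inequality $\inv(A) \le 2\cost(A)$ directly, by a charging argument that assigns each inverted pair to one of its two endpoints. First I would set up notation: for each index $i$ write $d_i = a_i - m_i$ and split it into its positive and negative parts $u_i = \max(d_i,0)$ and $v_i = \max(-d_i,0)$, so that $|a_i - m_i| = u_i + v_i$ and hence $\cost(A) = \sum_i (u_i + v_i)$. Informally, $u_i$ measures how far $a_i$ ``overshoots'' its target $m_i$, and $v_i$ how far it ``undershoots.''

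The key observation is that every inverted pair is forced to involve a substantial amount of overshoot at its left endpoint or undershoot at its right endpoint. Concretely, for an inverted pair $(i,j)$ with $i < j$ and $a_i \ge a_j$, I would use that $M$ is a strictly increasing integer sequence, so $m_j - m_i \ge j - i$, and compute
$$u_i + v_j \;\ge\; (a_i - m_i) + (m_j - a_j) \;=\; (a_i - a_j) + (m_j - m_i) \;\ge\; 0 + (j - i) \;=\; j - i.$$
Now classify each inverted pair $(i,j)$ as \emph{type $L$} if $u_i \ge v_j$ and \emph{type $H$} otherwise; this is a genuine partition of the inverted pairs. For a type-$L$ pair, $2u_i \ge u_i + v_j \ge j - i$, so $j - i \le 2u_i$; since $u_i$ is a non-negative integer, the index $i$ can be the first coordinate of at most $2u_i$ type-$L$ pairs (the only candidates for $j$ are $i+1,\dots,i+2u_i$). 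Summing over $i$, the number of type-$L$ pairs is at most $2\sum_i u_i$. Symmetrically, each $j$ is the second coordinate of at most $2v_j$ type-$H$ pairs, so the number of type-$H$ pairs is at most $2\sum_j v_j$. Adding the two bounds, $\inv(A) \le 2\sum_i u_i + 2\sum_j v_j = 2\sum_i (u_i + v_i) = 2\cost(A)$.

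The only place the hypotheses enter is the inequality $m_j - m_i \ge j - i$, which needs $M$ to be a \emph{strictly increasing integer} sequence (this is exactly where ``how far apart $m_j$ and $m_i$ must be'' turns into ``how far apart $j$ and $i$ can be''); without it the displacement bound $u_i + v_j \ge j - i$ fails. A subtlety worth flagging is that ties ($a_i = a_j$ with $i < j$) still count as inversions; but the argument handles these automatically, since the displacement bound only used $a_i \ge a_j$, never $a_i > a_j$. I expect the write-up to be short; the main thing to be careful about is to state the per-index counts (``$i$ is the first coordinate of at most $2u_i$ type-$L$ pairs'') before summing, so that the type-$L$/type-$H$ dichotomy prevents any double counting.
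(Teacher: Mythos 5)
Your proof is correct, and it takes a genuinely different route from the paper's. You charge each inversion locally: writing $u_i=\max(a_i-m_i,0)$ and $v_i=\max(m_i-a_i,0)$, every inverted pair $(i,j)$ satisfies $u_i+v_j\ge (a_i-a_j)+(m_j-m_i)\ge j-i$ because $M$ is strictly increasing over the integers, and splitting pairs according to whether $u_i\ge v_j$ lets you bound, per index, the number of pairs charged to a left endpoint $i$ by $2u_i$ and to a right endpoint $j$ by $2v_j$, giving $\inv(A)\le 2\sum_i(u_i+v_i)=2\cost(A)$; the per-index counting step is sound since $u_i$ is a non-negative integer and the admissible $j$ lie in $\{i+1,\dots,i+2u_i\}$. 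The paper instead runs an extremal exchange argument on the potential $\Delta(A)=2\cost(A)-\inv(A)$: it thresholds $A$ into $[m_1,m_n]$ so that a minimizer exists, takes a minimizer of $\Delta$ with maximal element sum, rules out strict inversions by an adjacent-swap comparison, and then bounds the remaining tie-inversions by observing that a constant run of length $l$ has cost at least $\binom{l}{2}/2$ against strictly increasing integer targets. Your argument is more elementary and self-contained---no finiteness/thresholding step, no tie-breaking among minimizers, and each inversion is certified by error at one of its own endpoints---and it ports essentially verbatim to real target sequences with $m_{i+1}\ge m_i+1$, since the window $\{i+1,\dots,i+\lfloor 2u_i\rfloor\}$ still contains at most $2u_i$ integers. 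What the paper's route buys is a picture of the extremal configurations (non-decreasing sequences with long constant runs), which makes the tightness of the factor $2$ transparent; both proofs invoke the hypotheses at exactly the same point, namely $m_j-m_i\ge j-i$.
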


\begin{proof}
Without loss of generality we can assume that all elements of $A$ are bounded between $m_1$ and $m_n$, since outliers can be thresholded without decreasing $\inv(A)$ or increasing $\cost(A)$. So the set of sequences is finite. Let $B = (b_1,\dots,b_n)$ be a sequence which minimizes $\Delta$, and assume that $B$ has the maximum possible sum of elements out of all sequences minimizing $\Delta$.

Suppose that $B$ were to have a strict inversion. Then there is some $1 \leq i < n$ such that $b_i > b_{i+1}$. Define two sequences $B^l$ and $B^h$ so that $b^l_i = b_{i+1}$ and $b^h_{i+1} = b_i$, and in all other locations $B^l$ and $B^h$ agree with $B$. Then by construction, $$\inv(B) - \inv(B^l) = \inv(B^h) - \inv(B).$$ Furthermore,
\begin{align*}
\cost(B) - \cost(B^l)
&= |b_i - m_i| - |b_{i+1} - m_i| \\
&\geq |b_i - m_{i+1}| - |b_{i+1} - m_{i+1}| \\
&= \cost(B^h) - \cost(B)
\end{align*}
since $m_i < m_{i+1}$, and the function $|b_i - x| - |b_{i+1} - x|$ is non-increasing. It follows from the above inequalities and the optimality of $B$ that $$\Delta(B^h) - \Delta(B) \leq \Delta(B) - \Delta(B^l) \leq 0.$$ Therefore $B^h$ minimizes $\Delta$ as well. But it has a larger sum of elements than $B$. Contradiction.

So we know that $B$ is non-decreasing, and thus $\inv(B)$ is exactly the number of pairs of equal elements. But a constant sequence of length $l$ has cost at least $\binom{l}{2}/2$, and contributes only $\binom{l}{2}$ pairs of equal elements. Partitioning $B$ into constant sequences, we get $\inv(B) \leq 2\cost(B)$.
\end{proof}

\section{Improved algorithm}\label{section:algo2}

In the previous section, we saw that using the predictions once at the beginning of each eviction chain takes the chain most of the way through the phase; random evictions of unmarked elements are then used until the chain ends. Suppose that the second element of an eviction chain does not reappear until $O(\eta/\textsc{opt})$ steps away from the end of the phase. Then evicting a uniformly random element of the cache---marked or unmarked---would terminate the chain immediately with probability $1 - O((\eta/\textsc{opt})/k)$.

In this section we present an improved algorithm \textsc{lnonmarker} motivated by the above observation. As the name may suggest, it is not quite a marker-based algorithm, and we need to give some new names to familiar concepts:

\begin{definition}
For any input sequence $z$ and cache size $k$, the \emph{phases} of the input sequence are defined recursively as follows: phase $r$ begins right after the end of phase $r-1$, and extends as long as possible without containing $k+1$ distinct elements.
\end{definition}

\begin{definition}
Fix an algorithm. An input element is called \emph{initial} for some phase $r$ if it appeared in phase $r$, and was present in the cache at the beginning of phase $r$. If it appeared in phase $r$ but was not present in the cache at the beginning of the phase, it is called \emph{non-initial}. 
\end{definition}

Note that the definition of phases given here coincides with the phases of any marker-based algorithm. This definition is algorithm-independent, and thus is also useful for non-marker-based algorithms. For any marker-based algorithm, the definitions of \emph{clean} and \emph{non-initial} coincide, as do the definitions of \emph{stale} and \emph{initial}. However, it can be seen that these may diverge in the execution of a non-marker-based algorithm. Some facts about clean and stale elements which we used in the previous section are now facts about initial and non-initial elements:

\begin{claim}\label{claim:nonmarkerfacts}
Every phase contains exactly $k$ arrivals. Every non-initial arrival causes a cache miss. Every other cache miss in the phase was caused by some previous cache miss in the same phase.
\end{claim}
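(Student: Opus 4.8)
The plan is to verify the three assertions one at a time; all of them follow from the bookkeeping rule of the demand-paging caching model, namely that an element enters the cache \emph{only} when it is requested and found absent (a cache miss), and leaves the cache \emph{only} when it is evicted, which happens only in response to a cache miss. Thus everything reduces to tracking, for each element and each phase, the sub-intervals of the phase during which that element resides in the cache.

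For the first assertion I would argue exactly as in the marker-based case. By the recursive definition, phase $r$ is extended as long as possible subject to seeing at most $k$ distinct elements; hence (for every phase except possibly the final, truncated one, just as in the marker-based claim) the element requested immediately after phase $r$ must be the $(k+1)$st distinct element seen since the start of phase $r$, so phase $r$ sees exactly $k$ distinct elements. Since an arrival in the phase is by definition the first in-phase appearance of a distinct element, there are exactly $k$ arrivals.

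For the second assertion, let $w$ be non-initial for phase $r$ and consider its first appearance in phase $r$, at some time $i$. By definition of non-initial, $w$ is not in the cache at the start of phase $r$, and by the choice of $i$ there is no request for $w$ between the start of phase $r$ and time $i$; since $w$ can only enter the cache on such a request, $w$ is not in the cache at time $i$, so this request is a cache miss. For the third assertion, consider a cache miss in phase $r$, on a request for some element $w$ at time $i$, that is \emph{not} the first in-phase appearance of a non-initial element. Then either $w$ is initial, so $w$ was in the cache at the start of phase $r$; or $w$ appeared earlier in phase $r$, so $w$ was in the cache immediately after its previous in-phase appearance (it was either already there, on a hit, or just inserted, on a miss). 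In both cases there is a time $i' < i$ inside phase $r$ at which $w$ is in the cache, yet $w$ is absent at time $i$, so $w$ was evicted at some time in $(i', i)$; since evictions occur only on a cache miss, there is a cache miss in phase $r$ strictly before time $i$ that evicted $w$, which is the ``causing'' miss. This is precisely one step of an eviction chain in the sense of Section~\ref{section:algo1}.

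I expect this to be essentially bookkeeping rather than a substantive argument; the only points that need care are (i) explicitly invoking that the model is demand paging, i.e.\ no prefetching, so that the ``enters only on a miss, leaves only on an eviction'' rule really holds, and (ii) the standard caveat that the last phase may be short, handled exactly as in the marker-based statement. I would close with a remark that iterating the third assertion chains every phase-$r$ cache miss back to the first appearance of some non-initial element, so the eviction-chain decomposition carries over verbatim to the non-marker setting (what breaks later is not this decomposition but the count of chains, since evicted elements need not be ``marked'').
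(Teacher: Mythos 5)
Your proof is correct, and it is exactly the routine bookkeeping the paper has in mind: the claim is stated without proof there, being the non-marker analogue of the marker-based facts and the eviction-chain reasoning of Section~\ref{section:prelim}, and your three-part argument (phase definition gives $k$ distinct elements, demand paging makes every first in-phase request of a non-initial element a miss, and any other missed element was in the cache earlier in the phase so must have been evicted at a prior in-phase miss) is the same derivation, with the right caveats about the possibly short final phase and the ``enters only on a miss, leaves only on an eviction'' rule.
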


Thus, in analogy with clean arrivals, every non-initial arrival heads an eviction chain, and the eviction chains partition the set of cache misses. However, unlike before, not all cache misses are necessarily upon arrivals: an element might arrive, be evicted, and reappear all in one phase.

\paragraph{Algorithm description} The new algorithm \textsc{lnonmarker} still maintains markings on cache elements. At the beginning of each phase, all cache elements are unmarked. Whenever a cache hit occurs, the element is marked. Whenever a cache miss occurs, the algorithm evicts some element and marks the new element. In particular, the algorithm has the following eviction strategy upon a cache miss: if the incoming element was non-initial, then evict the unmarked element with the highest predicted arrival time. If the incoming element was evicted by a non-initial element, then evict a uniformly random element of the cache (not necessarily unmarked). In all other cases, evict a uniformly random unmarked element.

Before analyzing the algorithm, we must show that it is in fact well-defined.

\begin{claim}
At any cache miss, there is at least one unmarked element. Thus, the algorithm is well-defined.
\end{claim}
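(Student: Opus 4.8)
The plan is as follows. First, observe that of the three cases in the eviction rule, only the first and the third actually require the existence of an unmarked element; the second case evicts an arbitrary cache element and is therefore well-defined provided the cache is nonempty, which it always is once any input has arrived. Moreover, an eviction is only needed when the cache is full --- during the first $k$ cache misses of the whole input there are free slots --- so it suffices to show: whenever a cache miss occurs and the cache holds $k$ elements, at least one of them is unmarked.

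Second, I would record the following one-directional invariant: at any point during phase $r$, every marked cache element has appeared in phase $r$. This is immediate from the algorithm's bookkeeping, since markings are erased at the start of each phase and an element becomes marked only at a moment when it appears (on a cache hit, or on the cache miss that inserts it). In a classical marker algorithm one has the stronger biconditional, and the count of marked elements is monotone within a phase; here that monotonicity fails, because the second eviction rule can evict a \emph{marked} element, which may later reappear in the same phase and be re-marked. The crux of the argument is that only the stated one-directional invariant, and not monotonicity, is needed.

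Third, I would count. Fix a cache miss on input element $z_i$, lying in phase $r$, with the cache full. By the defining property of phases, at most $k$ distinct elements appear in phase $r$ in total, so at most $k$ distinct elements have appeared in phase $r$ up to and including this step. There are two cases. If $z_i$ has appeared earlier in phase $r$ (so it was previously evicted), then $z_i$ is one of these at most $k$ distinct elements but is not currently in the cache, so at most $k-1$ of the $k$ elements currently in the cache have appeared in phase $r$; by the invariant, at most $k-1$ of them are marked. If instead $z_i$ is making its first appearance in phase $r$, then at most $k-1$ distinct elements appeared in phase $r$ strictly before this step, and every currently marked cache element was marked strictly before this step (since $z_i$ is not yet in the cache), so again at most $k-1$ cache elements are marked. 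In either case a full cache of size $k$ contains an unmarked element, and so the first and third eviction rules --- and hence \textsc{lnonmarker} --- are well-defined.

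The only real subtlety, and the step I expect to require the most care, is the bookkeeping in the third step when $z_i$ has already appeared earlier in the current phase: because \textsc{lnonmarker} evicts marked elements, one cannot simply invoke the fact that each phase contains exactly $k$ \emph{arrivals}, and must instead argue via the ``at most $k$ distinct elements per phase'' property together with the observation that $z_i$ itself is one of those distinct elements but lies outside the cache at the moment of the miss.
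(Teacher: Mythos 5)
Your proof is correct and follows essentially the same argument as the paper: every marked cache element must have appeared in the current phase, the missing element $z_i$ is outside the cache yet counts among the phase's at most $k$ distinct elements, so at most $k-1$ cache elements are marked. Your case split on whether $z_i$ appeared earlier in the phase, and your remarks about the initially non-full cache and the second eviction rule, are just slightly more detailed bookkeeping around the same counting idea.
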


\begin{proof}
Each marked element of the cache must have appeared in the current phase. The element which caused the cache miss is, of course, distinct from all elements of the cache. Since the phase contains at most $k$ distinct elements, the cache contains at most $k-1$ marked elements.
\end{proof}

\paragraph{Algorithm analysis} Fix a phase $r$ in which the set of elements is $A$. Fix a single execution of the algorithm. Let $S$ be the cache at the beginning of phase $r$. Then $A \setminus S$ is the set of non-initial elements, and $S \setminus A$ is the set of cache elements which do not arrive in phase $r$. Two facts relate these sets:

\begin{itemize}
\item For each non-initial element $c$, the corresponding eviction chain evicts at most one element of $S \setminus A$. 
\item Every $x \in S \setminus A$ is evicted at most once in the phase: $x$ does not appear in the phase, so after it is evicted it will not return to the cache.
\end{itemize}

Together with the observation that $|A| = |S| = k$, these facts imply that there is a bijection $f: A \setminus S \to S \setminus A$ such that if $c$'s chain evicts $x \in S \setminus A$ then $f(c) = x$. For any $c \in A \setminus S$ such that $c$'s chain does not evict any element of $S \setminus A$, we set $f(c)$ arbitrarily, subject to the condition that $f$ is a bijection.

By Claim~\ref{claim:nonmarkerfacts}, the number of cache misses in phase $r$ is the total length of the eviction chains headed by the non-initial elements of phase $r$. Let $c$ be one such non-initial element, with eviction chain of length $\text{length}(c)$, which arrives at time $t_c$ and evicts some unmarked element $e$. Suppose $e$ reappears in the phase. Let $N(c)$ be the number of first arrivals after $e$ in the same phase, and let $N^*(c)$ be the number of distinct elements after $e$ (not necessarily first arrivals).

The above definitions were made for a single execution, but the algorithm defines a distribution $\mathcal{E}$ over executions. Let $\mathcal{E}_r$ be the distribution of executions of the first $r-1$ phases, and let $\mathcal{E}_{t_c}$ be the distribution of executions through time $t_c$.

Conditioned on $\mathcal{E}_r$, the non-initial elements $c$ are determined, but $N^*(c)$ and $N(c)$ are random variables (over the randomness of the algorithm). Defining $\text{length}(c) = 0$ and $N^*(c) = 0$ if $c$ is not a non-initial element or if $e$ does not reappear in the phase, we get the following result.

\begin{claim}\label{claim:chainlength}
For any element $c$ in phase $r$, $$\mathbb{E}_\mathcal{E}[\text{length}(c) | \mathcal{E}_r] \leq \alpha \cdot \mathbb{E}_\mathcal{E}\left[1 +  \frac{N^*(c)}{k}\log N(c) \middle| \mathcal{E}_r\right]$$ for an absolute constant $\alpha$.
\end{claim}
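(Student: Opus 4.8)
The plan is to analyze the eviction chain headed by a non-initial element $c$ by splitting it into two stages: the single ``trusted'' eviction at the head, which evicts the unmarked element $e$ with latest predicted arrival, followed by the subsequent evictions. The key observation is that the very next eviction after $c$'s arrival is a \emph{uniformly random element of the whole cache} (not just the unmarked ones). Conditioning on $\mathcal{E}_{t_c}$, the quantities $N(c)$ and $N^*(c)$ are determined. When $e$ reappears in the phase, it evicts a uniformly random cache element; since there are $k$ cache slots and only $N^*(c)$ of the distinct elements in the phase arrive after $e$, the chain continues past this point with probability at most $N^*(c)/k$. If the chain does continue, we are back in a situation governed by random eviction of \emph{unmarked} elements, so we can invoke exactly the recurrence $R_m = 1 + \frac{1}{m}\sum_{j=1}^m R_{m-j} = O(\log m)$ from Claim~\ref{claim:randomunmarked}, with $m \le N(c)$ (the number of first arrivals after $e$), because only first arrivals can be evicted later in the chain. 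So, conditioned on $\mathcal{E}_{t_c}$,
\begin{align*}
\mathbb{E}[\text{length}(c) \mid \mathcal{E}_{t_c}]
&\le 1 + \Pr[\text{chain continues past } e] \cdot O(1 + \log N(c)) \\
&\le 1 + \frac{N^*(c)}{k} \cdot O(1 + \log N(c)) \\
&\le \alpha\left(1 + \frac{N^*(c)}{k}\log N(c)\right)
\end{align*}
for an absolute constant $\alpha$, absorbing additive constants into the leading $1$ (and noting $N^*(c) \le k$ so the $N^*(c)/k$ term is bounded). Here I am using $\text{length}(c) = 0$ and $N^*(c) = 0$ by convention when $c$ is not non-initial or $e$ does not reappear, so the inequality is trivial in those cases.

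The second step is to descend from conditioning on $\mathcal{E}_{t_c}$ to conditioning on $\mathcal{E}_r$. Since $\mathcal{E}_{t_c}$ refines $\mathcal{E}_r$ (time $t_c$ is within phase $r$), the tower property gives
\[
\mathbb{E}[\text{length}(c) \mid \mathcal{E}_r] = \mathbb{E}\big[\,\mathbb{E}[\text{length}(c)\mid \mathcal{E}_{t_c}] \;\big|\; \mathcal{E}_r\,\big] \le \alpha \cdot \mathbb{E}\left[1 + \frac{N^*(c)}{k}\log N(c) \;\middle|\; \mathcal{E}_r\right],
\]
which is precisely the claimed bound. One technical point to handle carefully: the event that $e$ reappears, and the value of $e$ itself, are functions of $\mathcal{E}_{t_c}$ (the identity of the evicted unmarked element is determined by the predictions and the cache contents at time $t_c$), so the conditioning in the first step is legitimate and $N(c), N^*(c)$ are genuinely measurable with respect to $\mathcal{E}_{t_c}$.

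The main obstacle I anticipate is making the ``the chain continues with probability $\le N^*(c)/k$, and if it does its further length is $O(\log N(c))$'' step fully rigorous, because these two events are not independent: whether the random cache eviction lands on a late-arriving element, and how many unmarked late-arriving elements remain afterward, are coupled. The clean way around this is to bound the \emph{unconditional} contribution — write $\text{length}(c) = 1 + \mathbbm{1}[\text{chain continues}] \cdot (\text{remaining length})$, and bound the remaining length deterministically (given $\mathcal{E}_{t_c}$ and given which element was hit) by the worst-case chain of unmarked-element evictions, whose expectation over the remaining randomness is $O(1+\log N(c))$ by the same recurrence as before. Then take expectations over the random cache hit, noting the continuation indicator has expectation $\le N^*(c)/k$ and is, after conditioning on which element was hit, independent of the downstream randomness. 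A secondary subtlety is that later evictions in the chain must be \emph{unmarked}, so the relevant count is $N(c)$ (first arrivals), not $N^*(c)$; this matches the asymmetry in the claim's statement and should fall out of the observation that any element evicted after the second step was in the cache, hence must have arrived earlier in the phase, hence (being evicted) was unmarked at the time — which forces it to be a first arrival among those after $e$.
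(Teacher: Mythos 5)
Your proposal is correct and follows essentially the same route as the paper's own proof: condition on $\mathcal{E}_{t_c}$, bound the probability that the chain continues past $e$'s reappearance by $N^*(c)/k$ (uniform eviction over all $k$ cache slots), bound the remaining length via the recurrence of Claim~\ref{claim:randomunmarked} with at most $N(c)$ unmarked candidates, and finish by the tower property over $\mathcal{E}_{t_c}\mid\mathcal{E}_r$, absorbing constants using $N^*(c)\leq k$. The only slip is your closing justification of the bound $m\leq N(c)$, which is stated backwards (being in the cache does not imply having arrived earlier in the phase); the correct reasoning, as in the paper, is that any cache element that already appeared in the current phase is marked, so an unmarked evicted element that continues the chain must have its first arrival still ahead, hence after $e$, and there are at most $N(c)$ such elements.
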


\begin{proof}
Condition on $\mathcal{E}_{t_c}$. Then $e$ is determined, and $N^*(c)$ and $N(c)$ are determined. If $e$ does not reappear in the phase, then $\mathbb{E}[\text{length}(c)|\mathcal{E}_{t_c}] = 0$. Suppose otherwise. Since $e$ evicts a random element $g$ from the cache, the probability that $g$ subsequently appears during the same phase is at most $p = N^*(c)/k$. If $g$ does appear, the eviction chain continues by random eviction of unmarked elements, until an evicted element is not in $A$. The number of unmarked elements at the time of $g$'s cache miss which are present in $A$ is at most $N(c)$, since any element which arrived earlier in the phase is either marked or no longer in the cache. Thus, by the same argument as in Claim~\ref{claim:randomunmarked}, the expected length of the chain is $O(\log N(c))$, conditioned on $g$'s cache miss. Since the length is $O(1)$ if $g$ does not appear, it follows that $\mathbb{E}[\text{length}(c)|\mathcal{E}_{t_c}]$ is at most $O\left(1 + p \log N(c)\right)$. Taking the expectation over $\mathcal{E}_{t_c}|\mathcal{E}_r$ yields the claimed result.
\end{proof}

Now we want to bound $N^*(c)$ in terms of the predictor error. We condition on the entire execution of the algorithm: that is, the following lemmas hold deterministically for all executions.

\begin{lemma}\label{lemma:lengthtoerror}
For any chain $(c, e, \dots)$ in which $e$ reappears after eviction, let $t$ be $c$'s arrival time. Let $r'$ be the next phase in which $f(c)$ arrives. Then $$\Err_{L(f(c),t)}(f(c)) + \Err_{L(e,t)}(e) \geq N^*(c) + k(r' - r - 1).$$
\end{lemma}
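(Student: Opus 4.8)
The goal is to charge the two quantities on the right-hand side --- $N^*(c)$, the number of distinct elements appearing after $e$'s reappearance in phase $r$, and $k(r'-r-1)$, which accounts for the ``gap phases'' in which $f(c)$ is absent --- against the prediction errors of $e$ and $f(c)$ made at their most recent appearances before time $t$. The starting point is the rule that produced $e$: when the non-initial element $c$ arrived at time $t$, \textsc{lnonmarker} evicted the unmarked cache element $e$ with the \emph{highest predicted arrival time}. In particular, $e$ was unmarked and in the cache at time $t$, and so was $f(c)$, unless $f(c) = e$ --- I would first dispose of (or rather, fold in) that degenerate case, and otherwise note that since $e$ was chosen over $f(c)$ we have $h_{L(e,t)}(e) \ge h_{L(f(c),t)}(f(c))$.

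The plan is then to produce a lower bound on $\Err_{L(e,t)}(e) + \Err_{L(f(c),t)}(f(c)) = |h_{L(e,t)}(e) - y_{L(e,t)}| + |h_{L(f(c),t)}(f(c)) - y_{L(f(c),t)}|$ by locating the true next arrivals of $e$ and $f(c)$. Here $y_{L(e,t)}$ is exactly the time at which $e$ reappears in phase $r$ (that reappearance is what continues the chain). Since $N^*(c)$ distinct elements appear strictly after $e$'s reappearance but still within phase $r$, and phase $r$ has exactly $k$ arrivals (Claim~\ref{claim:nonmarkerfacts}), $e$'s reappearance sits at least $N^*(c)$ arrivals before the end of phase $r$. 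Meanwhile $f(c) \in S \setminus A$ does not appear in phase $r$ at all, and by definition of $r'$ it skips phases $r+1, \dots, r'-1$ entirely, each of which contributes $k$ arrivals; so $y_{L(f(c),t)}$ --- the first arrival of $f(c)$ at or after time $t$ --- lies at least $N^*(c) + k(r'-r-1)$ arrivals \emph{after} $e$'s reappearance, hence that far after $y_{L(e,t)}$. The key point is that $h_{L(e,t)}(e)$ is a single number, and $y_{L(e,t)} < y_{L(f(c),t)}$ are on opposite ``sides'' of it in the sense that matters: whatever the value of $h_{L(e,t)}(e)$, either it is $\le y_{L(e,t)}$, in which case $|h_{L(f(c),t)}(f(c)) - y_{L(f(c),t)}| \ge |h_{L(e,t)}(e) - y_{L(f(c),t)}| \ge y_{L(f(c),t)} - y_{L(e,t)}$ using $h_{L(f(c),t)}(f(c)) \le h_{L(e,t)}(e)$; or it is $> y_{L(e,t)}$, and then I split on whether it exceeds $y_{L(f(c),t)}$ as well. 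In every case the sum of the two absolute deviations is at least $y_{L(f(c),t)} - y_{L(e,t)} \ge N^*(c) + k(r'-r-1)$, because one of the two predictions must ``cover'' the gap between the two true arrival times.

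The main obstacle I anticipate is bookkeeping the index conventions carefully: $\Err_i$ is indexed by the \emph{time} a prediction is made, and the relevant predictions are those made at $L(e,t)$ and $L(f(c),t)$, the last appearances of $e$ and $f(c)$ strictly before $t$; I must confirm that the prediction made at $L(e,t)$ indeed predicts the reappearance that continues the chain (i.e.\ that $e$ has no appearance between $L(e,t)$ and its chain-continuing reappearance, which holds because after being evicted by $c$ at time $t$ the element $e$ is out of the cache until it reappears), and similarly that $y_{L(f(c),t)}$ is genuinely the first arrival of $f(c)$ in phase $r'$. A secondary subtlety is the case analysis on where $h_{L(e,t)}(e)$ falls relative to $y_{L(e,t)}$ and $y_{L(f(c),t)}$; this is elementary but needs all cases written out so that the inequality $h_{L(f(c),t)}(f(c)) \le h_{L(e,t)}(e)$ is actually invoked where needed. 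Once the arrival counts are pinned down, the inequality itself is a short triangle-inequality argument.
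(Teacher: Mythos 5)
Your proposal is correct and follows essentially the same route as the paper: since $f(c)$ is an unmarked cache element at time $t$ and $e$ maximizes predicted arrival among unmarked elements, $h_{L(f(c),t)}(f(c)) \leq h_{L(e,t)}(e)$, while the true arrivals satisfy $y_{L(e,t)} \leq t_{r+1} - N^*(c)$ and $y_{L(f(c),t)} \geq t_{r+1} + k(r'-r-1)$, and the sum of the two errors then covers the gap $y_{L(f(c),t)} - y_{L(e,t)}$ exactly as in the paper's (terser) triangle-inequality step. The only cosmetic difference is your degenerate case $f(c)=e$, which in fact cannot arise here since $e$ reappears in phase $r$ while $f(c) \in S\setminus A$ does not.
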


\begin{proof}
Note that $f(c)$ is an unmarked element of the cache at time $t$. It does not appear in phase $r$, so if phase $r+1$ begins at time $t_{r+1}$ then $f(c)$ does not appear until at least time $t_{r+1} + k(r' - r - 1)$, but its predicted next appearance $h_{L(f(c),t)}(f(c))$ satisfies $h_{L(f(c),t)}(f(c)) \leq h_{L(e,t)}(e)$, since $e$ was the unmarked cache element which maximized predicted arrival at time $t$. The next appearance of $e$ is no later than time $t_{r+1} - N^*(c)$, since at least $N^*(c)$ elements lie between $e$'s appearance and the end of the phase. So $\Err_{L(f(c),t)}(f(c)) + \Err_{L(e,t)}(e) \geq N^*(c) + k(r'-r-1)$.
\end{proof}

\begin{remark}
There is an edge case in Lemma~\ref{lemma:lengthtoerror} if $f(c)$ never appears after phase $r$ and the last phase has less than $k$ elements. In this case the inequality in Lemma~\ref{lemma:lengthtoerror} can be off by as much as $k$. However, when the lemma is applied in Lemmas~\ref{lemma:totalerror} and \ref{lemma:injection}, the applications almost always implicitly weaken the inequality by at least $k$ anyway. Each application where the inequality is not weakened corresponds to a chain in one of the last two phases. There are only $O(k)$ such chains, each causing a discrepancy of at most $k$. To make the lemmas hold, it therefore suffices to replace $\eta$ by $\eta + O(k^2)$. However, $O(k^2)$ is bounded as the input length and $\textsc{opt}$ grow, so the competitive ratio is unaffected.
\end{remark}

Next we would like to sum the inequality given by Lemma~\ref{lemma:lengthtoerror} across all chains in all phases, to bound $\sum N^*(c)$ in terms of the total prediction error. Within a single phase, $f(c)$ and $e$ each uniquely determine $c$, so the error of each prediction is counted at most once. However, the error of a single prediction may be counted in multiple chains in successive phases. The following lemmas do a more careful summation, taking into consideration the double-counting.

\begin{lemma}\label{lemma:eunique}
For distinct times $t_1, t_2$ suppose $c_1 = z_{t_1}$ and $c_2 = z_{t_2}$ are non-initial elements of phases $r_1$ and $r_2$, which evict $e_1$ and $e_2$ respectively. Suppose that $e_1$ and $e_2$ reappear after eviction. Then $L(e_1, t_1) \neq L(e_2, t_2)$.
\end{lemma}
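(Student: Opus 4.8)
The plan is to argue by contradiction: suppose $L(e_1,t_1) = L(e_2,t_2)$; call this common time $\ell$. Then in particular $z_\ell = e_1 = e_2$, so the two chains evict the \emph{same} element $e := e_1 = e_2$, and $\ell$ is the last arrival of $e$ strictly before \emph{both} $t_1$ and $t_2$. Without loss of generality assume $t_1 < t_2$. The first thing I would pin down is where $\ell$ sits relative to the phases: since $e$ is unmarked in the cache at time $t_1$ (it is about to be evicted by the non-initial $c_1$), and unmarked cache elements at any point in a phase are exactly the elements that were in the cache at the phase's start and have not yet reappeared, $e$ must be an initial element of phase $r_1$ that has not arrived in $r_1$ before $t_1$. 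Hence $\ell$, the last appearance of $e$ before $t_1$, lies in some phase $\le r_1 - 1$, and $e$ does not appear in phase $r_1$ strictly before $t_1$.

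Next I would use the same structural fact at time $t_2$. Because $e$ is evicted by $c_1$ at time $t_1$, it leaves the cache; for $e$ to be evicted \emph{again} at time $t_2 > t_1$ it must first reappear (an arrival of $e$ strictly between $t_1$ and $t_2$). But $L(e_2,t_2) = \ell < t_1$ says there is \emph{no} appearance of $e$ in the interval $(\ell, t_2)$, in particular none in $(t_1, t_2)$ — contradiction. So the only remaining possibility to rule out is $r_1 = r_2$ with a single eviction; but the hypothesis gives distinct times $t_1 \neq t_2$, and we have just shown $t_1 < t_2$ forces an intermediate reappearance of $e$ that the equality $L(e_1,t_1) = L(e_2,t_2)$ forbids. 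This closes the argument.

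I expect the main subtlety to be the bookkeeping around \emph{which} appearance of $e$ is $\ell$ and making sure the "reappearance between $t_1$ and $t_2$" step is airtight — specifically, confirming that after $c_1$ evicts $e$ at $t_1$, the element $e$ genuinely is out of the cache and can only re-enter via an arrival, so that any later eviction of $e$ is preceded by an arrival of $e$ in $(t_1, t_2)$. This uses the earlier observation (from the bullet points preceding the bijection $f$, and from Claim~\ref{claim:nonmarkerfacts}) that cache misses inside a phase are either arrivals or reappearances within the phase, together with the fact that an element not in the cache can only be re-inserted on a cache miss for that element, i.e., on its arrival. Once that is granted, the definition of $L(\cdot,\cdot)$ as the \emph{last} prior appearance immediately yields the contradiction, since $L(e_2,t_2)=\ell$ would have to equal that intermediate arrival time, not $\ell < t_1$.
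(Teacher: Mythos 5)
Your argument is correct, and it takes a slightly different route from the paper's. The paper first uses the markings: since $e_1=e_2=w$ must be \emph{unmarked} at both evictions, and a re-inserted element is marked for the rest of its phase, the two evictions lie in different phases $r_1<r_2$; it then invokes the hypothesis that $e_1$ reappears after eviction to exhibit an appearance of $w$ in phase $r_1$ after $t_1$ (hence before $t_2$), giving $L(w,t_1)<L(w,t_2)$. You instead argue purely from cache dynamics: after being evicted at $t_1$, the element $e$ can only re-enter the cache via a later arrival, and it must be back in the cache to be evicted at $t_2$, so there is an appearance of $e$ strictly between $t_1$ and $t_2$, contradicting $L(e,t_2)=L(e,t_1)<t_1$. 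Both proofs hinge on producing an intermediate appearance of $e$ in $(t_1,t_2)$; yours produces it more directly, uses neither the marking structure nor the ``reappears after eviction'' hypothesis, and so is marginally more general and self-contained, while your paragraph locating $\ell$ relative to the phases (and the remark about ruling out $r_1=r_2$) is not actually needed for the contradiction. The paper's version, by contrast, reuses the marking machinery already set up for the algorithm and makes explicit that the two evictions occur in distinct phases, a fact that is thematically aligned with how the lemma is applied later, but is not logically required for the statement itself.
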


\begin{proof}
If equality were to hold, then $e_1 = e_2 = w$ for some $w$. Note that $w$ is evicted at $c_1$'s arrival, and $w$ is evicted at $c_2$'s arrival, but $w$ is unmarked both times. So the evictions occurred in different phases. Without loss of generality suppose that $r_1 < r_2$. Then $w$ appeared in phase $r_1$ after eviction by $c_1$, so $L(w, t_1) < L(w, t_2)$.
\end{proof}

\begin{lemma}\label{lemma:totalerror}
Summing over all chains $(c,e,\dots)$ in which $e$ reappears after eviction, $$\sum_c N^*(c_i) \leq 3\eta$$ where $\eta$ is the total prediction error.
\end{lemma}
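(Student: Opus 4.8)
The plan is to sum the inequality of Lemma~\ref{lemma:lengthtoerror} over all chains $(c,e,\dots)$ in which $e$ reappears after eviction, and show that the right-hand side contributions $N^*(c)$ add up to at most $3\eta$ despite the fact that a single prediction error $\Err_j(\cdot)$ may appear on the left in several chains across consecutive phases. Fix such a chain in phase $r$, with $c$ arriving at time $t$, and let $r'$ be the next phase in which $f(c)$ arrives. Lemma~\ref{lemma:lengthtoerror} gives
$$\Err_{L(f(c),t)}(f(c)) + \Err_{L(e,t)}(e) \;\geq\; N^*(c) + k(r'-r-1).$$
I would organize the errors on the left into two buckets: the $e$-type errors $\Err_{L(e,t)}(e)$ and the $f(c)$-type errors $\Err_{L(f(c),t)}(f(c))$. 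For the $e$-type errors, Lemma~\ref{lemma:eunique} says the indices $L(e,t)$ are distinct across all chains under consideration, so $\sum_c \Err_{L(e,t)}(e) \leq \eta$ with no double-counting. The work is therefore entirely in controlling the $f(c)$-type errors, which is where the $k(r'-r-1)$ slack term is designed to pay for the overcounting.

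For the $f(c)$-type errors, the issue is that $f(c)$ is a cache element of phase $r$ that does not reappear until phase $r'$, so for each intermediate phase $r+1,\dots,r'-1$ the same element $f(c)$ (unmarked, not appearing) could be ``inherited'' and play the role of some $f(c'')$ for a chain in that later phase, reusing the prediction made at time $L(f(c),t)$. Here is the key point: each such reuse in a later phase corresponds to a chain in that phase whose second element does \emph{not} reappear (an element that begins a phase in the cache but never arrives), i.e.\ a chain of length $O(1)$, and more precisely the slack $k(r'-r-1)$ in Lemma~\ref{lemma:lengthtoerror} is exactly $k$ per skipped phase, which dominates the $N^*(c'')$ of the reusing chain since $N^*(c'') \leq k$. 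So the plan is: fix a prediction index $j = L(f(c),t)$ and let it be charged by the original chain in phase $r$ together with at most $(r'-r-1)$ later chains that reuse it; the sum of $N^*$ over the later chains is at most $k(r'-r-1) \leq \Err_j(f(c))$ by Lemma~\ref{lemma:lengthtoerror} applied to the original chain. Hence the total $N^*$-mass charged to $j$, summed over the original chain and all reusers, is at most $N^*(c) + k(r'-r-1) \leq 2\Err_j(f(c))$, and since distinct $j$'s are charged disjointly (within a phase $c$ is determined by $f(c)$, and across phases the argument of this paragraph partitions the reuse) we get $\sum_c N^*(c) \leq 2\eta + \eta = 3\eta$ after adding the $e$-type bound.

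The main obstacle I anticipate is making the charging scheme of the previous paragraph precise: I need to verify that every $f(c)$-type prediction index is ``owned'' by exactly one originating chain (the earliest phase where $f(c)$ became a non-reappearing cache element with that particular last-occurrence time), that each later reusing chain is charged to a unique skipped-phase slot $k$, and that I am not simultaneously charging the same later chain as both an ``$e$-type'' contributor (already counted via Lemma~\ref{lemma:eunique}) and a ``reuser of some $f$'' — these must be kept on separate ledgers, which is why the bound is $3\eta$ rather than $2\eta$. Formalizing ``the slack pays for the reusers'' likely requires an explicit injection from reusing chains to pairs (originating index, skipped phase), which is presumably what Lemma~\ref{lemma:injection} (referenced in the remark) supplies; so I would either invoke that lemma or reproduce its short combinatorial core here. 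Once the bookkeeping is set up, the arithmetic $\eta + 2\eta = 3\eta$ is immediate.
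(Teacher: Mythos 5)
Your overall charging scheme (group chains by the prediction index $j=L(f(c),t)$ they share, observe they lie in distinct phases, and let the slack $k(r'-r-1)$ from Lemma~\ref{lemma:lengthtoerror} pay for the reusing chains since each $N^*\le k$, with Lemma~\ref{lemma:eunique} handling the $e$-side errors) is essentially the paper's argument. But there is a genuine flaw in the quantitative step: you claim $k(r'-r-1)\le \Err_{j}(f(c))$ ``by Lemma~\ref{lemma:lengthtoerror} applied to the original chain,'' and hence $N^*(c)+k(r'-r-1)\le 2\Err_j(f(c))$. The lemma does not give this: it only bounds the \emph{sum} $\Err_{L(f(c),t)}(f(c))+\Err_{L(e,t)}(e)$ from below by $N^*(c)+k(r'-r-1)$, and nothing prevents the entire error from sitting on $e$'s prediction. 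Concretely, $f(c)$'s prediction may point exactly to its true arrival deep in phase $r'$ (so $\Err_j(f(c))=0$) while $e$ is evicted because its predicted arrival is even later, even though $e$ actually returns within phase $r$; then your inequality $k(r'-r-1)\le\Err_j(f(c))$ fails outright, and the ``$2\eta$'' ledger for the $f$-type charges collapses. (Two smaller points: the reusing chains are not ``length $O(1)$'' chains --- their $N^*$ can be as large as $k$, which is exactly why you need one slack of $k$ per skipped phase --- and Lemma~\ref{lemma:injection} is about bounding the number of extra chains $C-L$, not about this summation, so it does not supply the missing bookkeeping.)

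The repair is to keep the $e$-error of the originating chain in the charge rather than folding everything into $\Err_j$: for the chains $c_1,\dots,c_m$ in phases $r_1<\dots<r_m$ sharing the index $j$, apply Lemma~\ref{lemma:lengthtoerror} only to the earliest one and use $N^*(c_i)\le k$ for the rest, giving $\sum_i N^*(c_i)\le N^*(c_1)+k(m-1)\le \Err_j(z_j)+\Err_{L(e_1,t_1)}(e_1)$. Summing over $j$, the first terms are over distinct times and the second terms are distinct by Lemma~\ref{lemma:eunique}, so you in fact get $\sum_c N^*(c)\le 2\eta$, which is even stronger than the stated $3\eta$ (and your separate full $e$-ledger becomes unnecessary). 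The paper proceeds slightly differently: it applies Lemma~\ref{lemma:lengthtoerror} to \emph{every} chain in the group, sums, uses $N^*\le k$ to absorb the quadratic slack $k\,m_t(m_t-1)/2$, and divides by $(m_t+1)/2$, which yields $2\Err_t(z_t)+\sum_i\Err_{L(e_i,t_i)}(e_i)\ge\sum_i N^*(c_i)$ per group and hence the constant $3$.
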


\begin{proof}
Fix a time $t$ and consider the set of chains $(c_i, e_i, \dots)$ such that $L(f(c_i), t_i) = t$, where $t_i$ is the arrival time of $c_i$: that is, $f(c_i) = z_t$ and the most recent appearance of $f(c_i)$ prior to time $t_i$ was at time $t$. Each chain is in a different phase, since $f$ is injective for any one phase. If there are $m_t$ such chains in phases $r_1 < \dots < r_{m_t}$, then $z_t$ does not appear after time $t$ until phase $r_{m_t} + 1$ or later. Applying Lemma~\ref{lemma:lengthtoerror} to each of the $m_t$ chains and summing, we get $$m_t \Err_t(z_t) + \sum_{i=1}^{m_t} \Err_{L(e_i, t_i)} (e_i) \geq k\frac{m_t(m_t-1)}{2} + \sum_{i=1}^{m_t} N^*(c_i) \geq \sum_{i=1}^{m_t} \frac{m_t + 1}{2} N^*(c_i).$$
Dividing both sides by $(m_t+1)/2$, we get that $$2\Err_t(z_t) + \sum_{i=1}^m \Err_{L(e_i, t_i)}(e_i) \geq \sum_{i=1}^{m_t} N^*(c_i).$$
Now sum over all times $t$. By Lemma~\ref{lemma:eunique}, all $L(e_i, t_i)$ are distinct, so the corresponding errors sum to at most $\eta$. Thus, we have $$3\eta \geq \sum_{c} N^*(c_i)$$ summing over all chains $(c,e,\dots)$ in which $e$ reappears after eviction. 
\end{proof}

Finally, we must bound the number of eviction chains $C$ across all phases. In a marker-based algorithm we would have $C = L$, with one chain per clean element. But here there is one chain per non-initial element, and not all non-initial elements are clean (and vice versa). Nonetheless, we can still bound the discrepancy $C - L$ against the prediction error. The intuition for the following proof is that any ``extra'' non-initial element must have been caused by eviction of a marked element in the previous phase. But for each chain which ends by evicting a marked element, one less unmarked element is evicted. So whereas in a marker-based algorithm, every cache element which did not appear in a given phase would be evicted by the end of the phase, in this algorithm some absent elements might remain in the cache. These elements have the potential to be clean in the next phase and yet not start eviction chains. Hence, in one case an extra chain can be charged against a non-existent chain. In the other case---when the absent element does not appear in the next phase either---the extra chain can be charged against prediction error.

\begin{lemma}\label{lemma:injection}
If $C$ is the number of non-initial elements across all phases, $L$ is the number of clean elements, and $\eta$ is the total prediction error, then $\eta \geq k(C - L)/2.$
\end{lemma}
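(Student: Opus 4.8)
The plan is to decompose $C-L$ phase by phase and charge the discrepancy against ``missing'' eviction chains together with prediction error; throughout, fix a single execution of \textsc{lnonmarker} (the inequality is deterministic). Call an element \emph{extra} for phase $r$ if it is non-initial for phase $r$ but not clean, and \emph{missing} for phase $r$ if it is clean for phase $r$ but initial; write $\mathrm{Extra}_r,\mathrm{Missing}_r$ for these sets, and for $x\in\mathrm{Extra}_r$ put $r(x)=r$. Since $|A_r|=|S_r|=k$, and the non-initial elements of phase $r$ are $A_r\setminus S_r$ while the clean elements are $A_r\setminus A_{r-1}$, a one-line counting identity gives $C-L=\sum_r\bigl(|\mathrm{Extra}_r|-|\mathrm{Missing}_r|\bigr)$, so it suffices to bound the right-hand side.

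Next I would pin down the structure of an extra element $x$ of phase $r$. As $x$ appears in phase $r-1$ but is absent from the cache at the end of that phase, it is evicted \emph{while marked} at some point of phase $r-1$; among the three eviction rules of \textsc{lnonmarker}, only ``evict a uniformly random element of the cache'' can remove a marked element, and it fires only at the second step of an eviction chain (on the reappearance of the unmarked element $e$ that the non-initial head $c$ evicted), hence at most once per chain. Thus $x$ is the third element of a chain $(c,e,x)$ of phase $r-1$ in which $e$ reappears, the chain terminates at $x$ (it does not reappear afterward), and — as each chain yields at most one such marked eviction — the map sending $x$ to its chain is injective on $\bigsqcup_r\mathrm{Extra}_r$, with image among chains whose second element reappears.

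Because this chain ends by evicting the marked element $x\in A_{r-1}$ rather than an element of $S_{r-1}\setminus A_{r-1}$, the bijection $f$ of phase $r-1$ assigns $c$ an element $w(x):=f(c)\in S_{r-1}\setminus A_{r-1}$ that is evicted by no chain of phase $r-1$ (an evicted one would be a forced image of $f$), so $w(x)\in S_r$. Applying Lemma~\ref{lemma:lengthtoerror} to $(c,e,x)$ — here ``$f(c)$'' is $w(x)$ — with $t$ the arrival time of $c$ and $r_{w(x)}\ge r$ the next phase in which $w(x)$ appears, gives
$$\Err_{L(w(x),t)}(w(x))+\Err_{L(e,t)}(e)\ \ge\ N^*(c)+k\bigl(r_{w(x)}-r\bigr).$$
Split $\bigsqcup_r\mathrm{Extra}_r=X_{\mathrm{miss}}\sqcup X_{\mathrm{err}}$ by whether $r_{w(x)}=r(x)$ or $r_{w(x)}>r(x)$. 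For $x\in X_{\mathrm{miss}}$, $w(x)$ survives phase $r(x)-1$ without appearing and then appears in phase $r(x)$, so $w(x)\in\mathrm{Missing}_{r(x)}$; moreover $x\mapsto(w(x),r(x))$ is injective (by injectivity of $f$ and of $x\mapsto c$), so $|X_{\mathrm{miss}}|\le\sum_r|\mathrm{Missing}_r|$. Combined with the identity, $k(C-L)\le k\,|X_{\mathrm{err}}|$.

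The main obstacle is bounding $k\,|X_{\mathrm{err}}|$ by $O(\eta)$: summing the displayed inequality naively over $X_{\mathrm{err}}$ double-counts, since one prediction $\Err_{L(w(x),t)}(w(x))$ may serve extra elements in several phases. I would handle this exactly as in the proof of Lemma~\ref{lemma:totalerror}: partition the elements $x$ into groups $G_u$ indexed by the time $u:=L(w(x),t_{c(x)})$, which pins $w(x)=z_u$ and the prediction $\Err_u(z_u)$. The members of $G_u$ lie in distinct phases $r_1<\cdots<r_m$, and because $z_u$ does not reappear between time $u$ and the end of phase $r_m-1$, one gets $r_{w(x_i)}\ge r_m$ for all $i$, with $r_{w(x_i)}\ge r_m+1$ once the top element $x_m$ also lies in $X_{\mathrm{err}}$. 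Summing the displayed inequality over $x_i\in X_{\mathrm{err}}\cap G_u$, dropping the nonnegative terms $N^*(c)$, using $r_m-r_i\ge m-i$, and dividing through then yields $2\,\Err_u(z_u)+\sum_i\Err_{L(e(x_i),t_{c(x_i)})}(e(x_i))\ \ge\ k\,|X_{\mathrm{err}}\cap G_u|$. Finally sum over all $u$: the terms $\Err_u(z_u)$ sit at distinct times, and by Lemma~\ref{lemma:eunique} the times $L(e(x),t_{c(x)})$ are distinct across all $x\in\bigsqcup_r\mathrm{Extra}_r$, so each family of errors totals at most $\eta$. This gives $k(C-L)\le k\,|X_{\mathrm{err}}|=O(\eta)$; carrying the constants through the bookkeeping (and folding the last-two-phase edge cases of Lemma~\ref{lemma:lengthtoerror} into the $O(k^2)$ correction of the accompanying remark) yields the claimed $\eta\ge k(C-L)/2$.
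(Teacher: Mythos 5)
Your proposal follows essentially the same route as the paper's own proof: the extra/missing decomposition of $C-L$, the observation that an extra element must be the marked third element of a chain $(c,e,x)$ whose second element reappears, the charge to the unevicted element $f(c)\in S\setminus A$ via Lemma~\ref{lemma:lengthtoerror}, the case split according to whether $f(c)$ reappears in the very next phase (your $X_{\mathrm{miss}}$ versus $X_{\mathrm{err}}$), and the grouping by the time $u=L(f(c),t_c)$ together with Lemma~\ref{lemma:eunique} to control double-counting. All of these structural steps are sound, and your injectivity arguments (extra element $\mapsto$ chain $\mapsto f(c)$, per phase) match the paper's.

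The one genuine gap is in the final accounting, and it is exactly the constant in the statement. Your per-group step sums the inequality over all of $X_{\mathrm{err}}\cap G_u$ and divides, in the style of Lemma~\ref{lemma:totalerror}; this correctly yields $2\Err_u(z_u)+\sum_i \Err_{L(e_i,t_i)}(e_i)\ \ge\ k\,|X_{\mathrm{err}}\cap G_u|$, but after summing over $u$ a single prediction can be counted with coefficient $2$ as a ``first'' term (for its own group) and once more as a ``second'' term (the two families need not be disjoint), so what you actually obtain is $3\eta\ge k\,|X_{\mathrm{err}}|\ge k(C-L)$, i.e.\ $\eta\ge k(C-L)/3$. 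The closing sentence that ``carrying the constants through the bookkeeping yields $\eta\ge k(C-L)/2$'' is therefore unsupported by the argument as written. The paper gets the constant $2$ with a sharper per-group step: apply the inequality only to the \emph{earliest} member of the group; since the phases of the members of $X_{\mathrm{err}}\cap G_u$ are distinct and all strictly precede the next appearance phase $r^{\ast}$ of $z_u$, that single inequality already has right-hand side at least $k\,|X_{\mathrm{err}}\cap G_u|$, so $\Err_u(z_u)+\Err_{L(e_{i_0},t_{i_0})}(e_{i_0})\ge k\,|X_{\mathrm{err}}\cap G_u|$, and summing over $u$ each prediction error is counted at most twice, giving $2\eta\ge k(C-L)$. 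This repair drops into your write-up verbatim; with your constant $3$ the asymptotic Theorem~\ref{theorem:nonmarker} is unaffected, but the exact-constant corollary (and the lemma as stated) would not be.
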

\begin{proof}
Suppose some non-initial element $g$ in phase $r+1$ is not clean. Then it appeared in phase $r$, but was not in the cache at the end of phase $r$. In the last eviction of $g$ during phase $r$, $g$ was already marked. So by the algorithm design, that eviction must be third in some eviction chain $(c, e, g)$, where $c$ is some non-initial element in phase $r$. Then $f(c)$ does not appear in phase $r$. Nonetheless, $f(c)$ is not in $c$'s eviction chain, since $g$ is the unique element in $c$'s eviction chain which does not reappear in phase $r$. Furthermore, by definition of $f$, no other chain in phase $r$ evicts $f(c)$, so it is in the cache at the end of phase $r$. There are two cases:
\begin{enumerate}
\item $f(c)$ appears in phase $r+1$. Then $f(c)$ is clean and initial in phase $r+1$.
\item $f(c)$ does not appear in phase $r+1$. Let $r'$ be the next phase in which $f(c)$ does appear. Then if $t$ is the arrival time of $c$, by Lemma~\ref{lemma:lengthtoerror} $$\Err_{L(f(c),t)}(f(c)) + \Err_{L(e,t)}(e) \geq k(r' - r - 1).$$
\end{enumerate}

The composed map $g \mapsto c \mapsto f(c)$ is injective for fixed phase $r$, since $g$ is determined by the eviction chain of $c$, and $f$ is injective for a fixed phase. If case (2) never occurred, then every non-initial, non-clean element in phase $r+1$ would be injectively mapped to an initial, clean element in phase $r+1$, so we would have $C \leq L$. More generally, case (2) occurs for at least $C - L$ chains. Each case (2) chain provides an inequality bounding two prediction errors by at least $k$, and ideally we would simply add up the inequalities to bound $\eta$. However, some predictions may be counted in several of the inequalities. 

By Lemma~\ref{lemma:eunique}, all terms $\Err_{L(e,t)}(e)$ in the inequalities are contributed by different predictions---i.e. adding up those error terms does not double-count.

Consider a fixed time $t$. Consider the set of case-(2) chains $(c_i, e_i, g_i)$ such $L(f(c_i), t_i) = t$, where $t_i$ is the time of $c_i$'s arrival. Each chain is in a different phase, since $f$ is injective for any one phase. If there are $m_t$ such chains in phases $r_1 < \dots < r_{m_t}$, then $z_t$ does not appear after time $t$ until phase $r_{m_t} + 2$ or later. So the case-(2) inequality applied to the earliest chain $(c_1, e_1, g_1)$ gives $$\Err_t(z_t) + \Err_{L(e_1,t_1)}(e_1) \geq k(r_{m_t} + 1 - r_1) \geq km_t.$$ 

Summing the above inequality over all times $t$, each prediction error is counted at most twice---once as the first term and once as the second---whereas $\sum_t m_t \geq C - L$ as previously shown. Hence, $2\eta \geq k(C - L)$, as desired.
\end{proof}

With the above error/performance bounds in hand, we can prove a bound on the competitive ratio of the algorithm.

\begin{theorem}
The algorithm \textsc{lnonmarker} achieves a competitive ratio of $$O\left(1 + \frac{\eta/\textsc{opt}}{k} \log k\right)$$ when the prediction error is no more than $\eta$.
\end{theorem}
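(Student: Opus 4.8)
At this point essentially all the work has been done in the preceding lemmas, and what remains is to assemble them. The plan is to write the expected total cost as a sum of expected eviction-chain lengths, bound each chain length via Claim~\ref{claim:chainlength}, crudely replace $\log N(c)$ by $\log k$, and then plug in the two global estimates $\sum_c N^*(c)\le 3\eta$ (Lemma~\ref{lemma:totalerror}) and $C\le L+2\eta/k$ (Lemma~\ref{lemma:injection}), finishing with $L=\Theta(\textsc{opt})$ (Lemma~\ref{lemma:fiat}).

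In detail: by Claim~\ref{claim:nonmarkerfacts}, the number of cache misses in phase $r$ equals the total length of the eviction chains headed by the non-initial elements of that phase (with a negligible discrepancy of at most one per chain for chains whose second element never reappears, where our convention sets $\mathrm{length}(c)=0$ but the true value is $1$; this contributes at most $C$ in total). Conditioning on $\mathcal{E}_r$, the set of non-initial elements of phase $r$ is determined; applying Claim~\ref{claim:chainlength} to each and using $N(c)\le k$ (every phase has at most $k$ distinct elements, so at most $k$ arrivals lie after $e$), we get
\[
\mathbb{E}\bigl[\text{\# misses in phase }r \;\big|\; \mathcal{E}_r\bigr]\ \le\ \alpha\, C_r + \frac{\alpha\log k}{k}\sum_{c\text{ non-initial in }r}\mathbb{E}\bigl[N^*(c)\mid\mathcal{E}_r\bigr],
\]
where $C_r$ is the number of non-initial elements in phase $r$. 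Summing over all phases and taking expectations (using $\sum_r C_r=C$), together with the $+C$ discrepancy above, yields
\[
\mathbb{E}[\cost]\ \le\ (\alpha+1)\,\mathbb{E}[C] + \frac{\alpha\log k}{k}\,\mathbb{E}\Bigl[\sum_c N^*(c)\Bigr],
\]
the sum being over all eviction chains $(c,e,\dots)$ in which $e$ reappears after eviction.

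Now I invoke the deterministic bounds. Lemma~\ref{lemma:totalerror} gives $\sum_c N^*(c)\le 3\eta$ for every execution, so $\mathbb{E}[\sum_c N^*(c)]\le 3\eta$. Lemma~\ref{lemma:injection} gives $C\le L+2\eta/k$ for every execution, so $\mathbb{E}[C]\le L+2\eta/k$, and by Lemma~\ref{lemma:fiat} we have $L\le 2\textsc{opt}$. Substituting,
\[
\mathbb{E}[\cost]\ \le\ (\alpha+1)\Bigl(2\textsc{opt}+\tfrac{2\eta}{k}\Bigr) + \tfrac{3\alpha\eta\log k}{k}\ =\ O\Bigl(\textsc{opt} + \tfrac{\eta\log k}{k}\Bigr),
\]
using $\log k\ge 1$ to absorb the $\eta/k$ term. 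Dividing by $\textsc{opt}$ gives the competitive ratio $O\bigl(1 + \frac{\eta/\textsc{opt}}{k}\log k\bigr)$; the additive constant required in the definition of competitive ratio absorbs the $O(k^2)$ correction to $\eta$ noted in the remark after Lemma~\ref{lemma:lengthtoerror} (which only affects the cost bound additively by $O(k\log k)$, a quantity bounded as the input grows).

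\textbf{Expected main obstacle.} Because Lemmas~\ref{lemma:totalerror}, \ref{lemma:injection}, \ref{lemma:fiat} are the substantive ingredients and are assumed here, the only real care needed is bookkeeping: making sure the conditional-expectation bound of Claim~\ref{claim:chainlength} composes correctly with the pointwise (deterministic) bounds on $\sum_c N^*(c)$ and on $C$ — in particular that the random set of eviction chains causes no trouble when passing expectations through the sum — and that replacing $\log N(c)$ by $\log k$ is the step that converts the per-chain analysis into the clean $\frac{\eta/\textsc{opt}}{k}\log k$ form (this is precisely where the non-marker algorithm improves on \textsc{lmarker}, since here the logarithm sits inside a factor $N^*(c)/k$ rather than needing a Jensen step).
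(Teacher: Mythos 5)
Your proof is correct and follows essentially the same route as the paper's: condition on $\mathcal{E}_r$, apply Claim~\ref{claim:chainlength} with $N(c)\le k$ to pull out a $\frac{\log k}{k}N^*(c)$ term, sum over phases via the law of total expectation, and then substitute the deterministic bounds of Lemmas~\ref{lemma:totalerror} and \ref{lemma:injection} together with $L\le 2\textsc{opt}$ from Lemma~\ref{lemma:fiat}. Your extra $+C$ term patching the $\mathrm{length}(c)=0$ convention for chains whose evicted element never reappears is harmless, slightly more careful bookkeeping and does not change the bound.
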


\begin{proof}
Fix any phase $r$. Then for any execution $E_r$ of the first $r-1$ phases, $$\mathbb{E}[\text{\# cache misses in phase $r$} | \mathcal{E}_r = E_r] = \sum_{(c,e,\dots) \in r} \mathbb{E}[\text{length}(c) | \mathcal{E}_r = E_r],$$ summing over chains in phase $r$.

By Claim~\ref{claim:chainlength}, the right hand side is bounded by $$\alpha C_r + \alpha\frac{\log k}{k} \mathbb{E}\left[\sum_{(c,e,\dots)\in r} N^*(c)\middle| \mathcal{E}_r = E_r \right],$$ where $C_r$ is the number of chains in phase $r$, and $\alpha$ is a constant. Applying law of total expectation and then summing over all phases, $$\mathbb{E}[\text{\# cache misses}] \leq \alpha\mathbb{E}[C] + \alpha\frac{\log k}{k} \mathbb{E}\left[\sum_{(c,e,\dots)} N^*(c) \right].$$

From Lemma~\ref{lemma:injection}, the inequality $C \leq 2\eta/k + L$ holds for all executions, and therefore in expectation. From Lemma~\ref{lemma:totalerror}, the inequality $\sum_c N^*(c) \leq 3\eta$ holds for all executions, where $c$ ranges over chains $(c, e, \dots)$ where $e$ reappears after eviction. Recall that $N^*(c)$ was defined to be $0$ for all other chains. So we get that

\begin{align*}
    \mathbb{E}[\text{\# cache misses}]
    &\leq 2\alpha\frac{\eta}{k} + \alpha L + 3\alpha\frac{\eta}{k}\log k.
\end{align*}

Recalling from Lemma~\ref{lemma:fiat} that $L/2 \leq \textsc{opt}$, the competitive ratio follows.
\end{proof}

The above algorithm has one flaw: it does not satisfy any robustness guarantee (at least, that has been proven), since as $\eta/\textsc{opt} \to \infty$ the bound on the number of chains disappears, and the competitive ratio becomes potentially unbounded. This can be resolved (albeit in a somewhat unsatisfactory manner) by appealing to a black-box simulation theorem. For concreteness, we recall the following theorem:

\begin{theorem}\label{theorem:sim}\cite{LV2018}
Let $A, B$ be algorithms for the caching problem with competitive ratios of $\alpha$ and $\beta$ respectively. Then there is a black box algorithm ALG with a competitive ratio of $9\min(\alpha, \beta)$.
\end{theorem}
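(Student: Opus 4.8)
The plan is to run $A$ and $B$ as two internal simulations on the entire request sequence, and to have $\textsc{alg}$ physically follow exactly one of them at a time, switching between them lazily. Write $C_A(i)$ and $C_B(i)$ for the (expected, if the algorithms are randomized) number of cache misses incurred by the simulations of $A$ and $B$ on the first $i$ requests; these are deterministic and nondecreasing in $i$. Whenever $\textsc{alg}$ is following $X\in\{A,B\}$ it keeps its cache equal to $X$'s cache and copies $X$'s eviction decisions, so that while following $X$ its incremental cost equals $X$'s incremental cost. The switching rule: while following $X$, at the first request that makes the other simulation $Y$ satisfy $C_Y(i)+\theta\le C_X(i)$ (for a fixed threshold $\theta=\Theta(k)$), $\textsc{alg}$ switches to following $Y$, at a one-time cost of at most $k$ to overwrite its cache with $Y$'s. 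The slack $\theta$ ensures each switch is ``paid for'' by a cost gap of size $\Theta(k)$ between the simulations, and --- because the switch fires at the \emph{first} violating request --- the gap at any switch lies in $[\theta,\theta+1)$.

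First I would bound the number of switches. If $t_1<t_2$ are consecutive times at which $\textsc{alg}$ switches to $Y$, then $C_Y(t_1)+\theta\le C_X(t_1)$ and $C_X(t_2)+\theta\le C_Y(t_2)$; since costs are nondecreasing, chaining these gives $C_Y(t_2)\ge C_Y(t_1)+2\theta$, so $C_Y$ grows by at least $2\theta$ between consecutive switches to $Y$. Also, a switch to $Y$ happens only when $C_Y\le C_X-\theta$, and at that moment $C_Y$ is at most $L:=\min(C_A(n),C_B(n))$: if $X$ has the smaller final cost then $C_Y(t)\le C_X(t)\le C_X(n)=L$, and otherwise it is trivial. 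Hence there are at most $L/(2\theta)+1$ switches to $Y$, so $O(L/k)$ switches in all, and the total cache-replacement cost is $O(L)+O(k)$.

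Next I would bound what $\textsc{alg}$ pays while actually following each simulation. The intervals during which $\textsc{alg}$ follows a fixed simulation are disjoint, and over an interval following $X$ the incurred cost is $C_X(\mathrm{end})-C_X(\mathrm{start})$. Summed over all intervals following the simulation with smaller final cost, this telescopes to at most $L$. For an interval following the simulation $W$ with the larger final cost, write the incurred cost as $[C_W(\mathrm{end})-C_{W'}(\mathrm{end})]-[C_W(\mathrm{start})-C_{W'}(\mathrm{start})]+[C_{W'}(\mathrm{end})-C_{W'}(\mathrm{start})]$, where $W'$ is the other simulation; by the switching rule, at the start of the interval the gap $C_{W'}-C_W$ lies in $[\theta,\theta+1)$ and at the end the gap $C_W-C_{W'}$ lies in $[\theta,\theta+1)$, so the first two bracketed terms are $O(\theta)$ in absolute value and the interval costs at most $O(\theta)$ plus the increase of $C_{W'}$ over it. Summing over all such intervals, the $C_{W'}$ increases telescope to at most $L$ and there are $O(L/k)$ intervals, for a total of $O(L)+O(k)$. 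Adding everything, $\text{cost}_{\textsc{alg}}(n)=O(L)+O(k)=O(\min(C_A(n),C_B(n)))+O(k)$; since $C_A(n)\le\alpha\cdot\opt+c_A$ and $C_B(n)\le\beta\cdot\opt+c_B$ by hypothesis, this is $O(\min(\alpha,\beta))\cdot\opt+O(1)$. Tracking the constants through the two accounting steps and optimizing the choice of $\theta$ yields the stated factor $9$.

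The step I expect to be the main obstacle is this second accounting: it is not enough that switches are rare --- one must also rule out $\textsc{alg}$ being stuck in a long, costly stretch following the worse simulation, and what makes this work is precisely that the lazy rule caps how far the followed simulation's cost can drift above the other's before a switch fires, which is why the gap at switch times has to be pinned (to $[\theta,\theta+1)$) rather than merely bounded below. A minor technical point: to make ``$\textsc{alg}$'s incremental cost equals the followed simulation's incremental cost'' precise when $A$ and $B$ are randomized, note that the switch times depend only on the deterministic quantities $C_A(\cdot),C_B(\cdot)$, so it suffices for $\textsc{alg}$ to mirror a single sample path of whichever simulation it is following and then take expectations.
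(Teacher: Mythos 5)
The paper itself does not prove Theorem~\ref{theorem:sim}: it is imported from \cite{LV2018}, and the surrounding text only describes the idea (simulate both algorithms and switch to the other when the one you follow starts to be heavily outperformed, the argument carrying over verbatim to the prediction-augmented setting). Your proposal is a correct, self-contained instantiation of exactly that idea, so in spirit it matches the cited argument; the main difference is that you trigger switches on an \emph{additive} gap $\theta=\Theta(k)$ rather than the multiplicative/doubling rule that produces the particular constant $9$ in \cite{LV2018}. Your accounting goes through: switches to each simulation are spaced by cost increments of $2\theta$ of that simulation and occur only while its cost is at most $L=\min(C_A(n),C_B(n))$, giving $O(L/k)$ switches; the intervals spent following the better simulation telescope to $L$; and the pinned gap $[\theta,\theta+1)$ at switch times caps each interval spent on the worse simulation by $O(\theta)$ plus the other simulation's increase. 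Altogether this gives $\mathrm{cost}_{\textsc{alg}} \leq c\cdot\min(C_A(n),C_B(n)) + O(k)$ for an explicit small constant $c$ (roughly $4$ with $\theta=k$), and since the $O(k)$ term is absorbed into the additive constant allowed by the competitive-ratio definition, this implies the stated theorem. Two points to tighten. First, in the paper's model evictions happen only on a miss, so you cannot literally ``overwrite the cache with $Y$'s at cost $k$''; replace this by the standard lazy-synchronization argument: after a switch, \textsc{alg} mirrors $Y$, the two caches differ in at most $k$ elements, and each extra miss of \textsc{alg} (relative to $Y$) can be used to shrink that difference, so the per-interval overhead is at most $k$ --- which is exactly the per-switch charge you use, but it should be argued, not asserted as a bulk overwrite. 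Second, the closing claim that tracking constants ``yields the stated factor $9$'' is not established and is not what your scheme naturally gives; it gives a different (in fact smaller) multiplicative constant together with an additive $O(k)$, which suffices because a stronger ratio bound implies the stated one --- state the constant you actually obtain rather than $9$. A small slip worth fixing: for consecutive switches to $Y$ at $t_1<t_2$, the inequality you write at $t_2$ is the trigger for a switch to $X$; the correct chain passes through the intermediate switch to $X$ at some $s\in(t_1,t_2)$, namely $C_Y(t_2)\geq C_Y(s)\geq C_X(s)+\theta\geq C_X(t_1)+\theta\geq C_Y(t_1)+2\theta$, which is the bound you actually use.
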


The black box algorithm in the above theorem proceeds by simulating $A$ and $B$ and switching between them whenever one starts to heavily outperform the other. The proof generalizes without change to the learned caching problem. Since there is an $O(\log k)$-competitive algorithm for learned caching which simply ignores the predictions, Theorem~\ref{theorem:sim} implies that we can obtain an $O(\log k)$ worst-case guarantee for our predictive caching algorithm with only an extra constant factor loss:

\begin{corollary}
There is an algorithm for caching with predictions that achieves a competitive ratio of $$O\left(1 + \min\left(1, \frac{\eta/\textsc{opt}}{k}\right) \log k\right).$$
\end{corollary}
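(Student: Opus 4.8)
The plan is to obtain this corollary as an immediate consequence of the theorem just proved about \textsc{lnonmarker}, combined with the black-box simulation result Theorem~\ref{theorem:sim}. The theorem shows \textsc{lnonmarker} is $\alpha$-competitive for $\alpha = O(1 + (\eta/\textsc{opt})k^{-1}\log k)$. Separately, the random marker algorithm from the traditional setting---which simply discards the oracle's advice---is $\beta$-competitive with $\beta = O(\log k)$. The first step is therefore to instantiate Theorem~\ref{theorem:sim} with $A = \textsc{lnonmarker}$ and $B$ the random marker algorithm, which yields an algorithm \textsc{alg} with competitive ratio $9\min(\alpha,\beta) = O\big(\min(1 + (\eta/\textsc{opt})k^{-1}\log k,\ \log k)\big)$.

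The only point that needs a sentence of justification is that Theorem~\ref{theorem:sim} applies in the learned-caching model at all. This is immediate: the simulation algorithm only runs $A$ and $B$ as subroutines, counts their cache misses on the common input, and switches between them according to a threshold rule; it never inspects the predictions itself, but merely forwards them to whichever subroutine is currently active. Since the guarantees of $A$ and $B$ are both stated relative to the same $\textsc{opt}$ (and, for $A$, the same $\eta$), the proof of Theorem~\ref{theorem:sim} goes through verbatim, with $\alpha$ now a function of $\eta/\textsc{opt}$.

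It then remains to rewrite $\min(1 + (\eta/\textsc{opt})k^{-1}\log k,\ \log k)$ in the claimed form $1 + \min(1, (\eta/\textsc{opt})/k)\log k$. Writing $x = (\eta/\textsc{opt})/k$, I would split into $x \geq 1$ and $x < 1$: when $x \geq 1$ both expressions are $\Theta(\log k)$; when $x < 1$ we have $\min(1,x) = x$, and $\min(1 + x\log k,\ \log k)$ is either $1 + x\log k$ (which matches exactly) or, only in the narrow range $1 - 1/\log k \le x < 1$, equal to $\log k$, which is still $\Theta(1 + x\log k)$ there. Hence the two quantities agree up to constants, and substituting back gives the stated bound. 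I do not anticipate any real obstacle; the content of the corollary lies entirely in the earlier theorem and in Theorem~\ref{theorem:sim}, and this last step is just bookkeeping plus the elementary case analysis on $x$.
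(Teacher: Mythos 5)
Your proposal is correct and matches the paper's own argument: the paper likewise combines \textsc{lnonmarker} with an $O(\log k)$-competitive prediction-oblivious algorithm via the black-box simulation result (Theorem~\ref{theorem:sim}), noting that its proof carries over unchanged to the learned-caching setting. Your extra case analysis on $x = (\eta/\textsc{opt})/k$ is just the routine bookkeeping the paper leaves implicit, and it is done correctly.
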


Tracing through the proof, it turns out that the exact bound is as follows:

\begin{corollary}
There is an algorithm for caching with predictions that achieves a competitive ratio of $$9\min\left(4 + 7\frac{\eta/\textsc{opt}}{k} + 3\frac{\eta/\textsc{opt}}{k} H(k), 2H(k) \right).$$
\end{corollary}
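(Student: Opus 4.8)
The plan is to obtain the desired algorithm by applying the black-box simulation of Theorem~\ref{theorem:sim} to $\textsc{lnonmarker}$ and the classical randomized marking algorithm, the latter of which simply ignores the predictions and is known to be at most $2H(k)$-competitive on a cache of size $k$. Since the black box has competitive ratio $9\min(\cdot,\cdot)$ of its two inputs, and since the proof of Theorem~\ref{theorem:sim} carries over verbatim to the learned-caching setting, it suffices to show that $\textsc{lnonmarker}$ itself has competitive ratio at most $\alpha := 4 + 7\tfrac{\eta/\textsc{opt}}{k} + 3\tfrac{\eta/\textsc{opt}}{k}H(k)$; the black box then achieves $9\min(\alpha, 2H(k))$, which is exactly the claimed bound. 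So the task reduces to re-running the analysis of the preceding theorem with every asymptotic $O(\cdot)$ replaced by an explicit constant.

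The one place where constants genuinely have to be pinned down is a sharpened form of Claim~\ref{claim:chainlength}. Consider a chain $(c, e, g, \dots)$ headed by a non-initial element $c$ arriving at time $t_c$, and condition on the execution $\mathcal{E}_{t_c}$ through $t_c$. The chain contributes one miss for $c$; a further miss if $e$ reappears; and --- because $e$'s reappearance evicts a uniformly random cache element $g$, which itself reappears with probability at most $N^*(c)/k$ --- with probability at most $N^*(c)/k$ a further subchain of random unmarked evictions beginning at $g$'s miss. Conditioned on at most $m \le N(c)$ of the not-yet-marked cache elements reappearing in the phase, that subchain has expected length $\widetilde R_m$, where $\widetilde R_0 = 1$ and $\widetilde R_m \le 1 + \tfrac1m\sum_{i=0}^{m-1}\widetilde R_i$; this telescopes to $\widetilde R_m \le \widetilde R_{m-1} + \tfrac1m$, so $\widetilde R_m \le 1 + H(m) \le 1 + H(k)$. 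Assembling these pieces and taking expectations over $\mathcal{E}_{t_c}\mid\mathcal{E}_r$, and adopting the convention $N^*(c) = 0$ when $e$ does not reappear (so that $\text{length}(c) = 1 \le 2$ in that case), one gets $$\mathbb{E}[\text{length}(c)\mid\mathcal{E}_r] \;\le\; 2 + \frac{N^*(c)}{k}\bigl(1 + H(N(c))\bigr) \;\le\; 2 + \frac{1 + H(k)}{k}\,\mathbb{E}[N^*(c)\mid\mathcal{E}_r],$$ using $N(c) \le k$ in the last step. (Note that the single constant $\alpha$ of Claim~\ref{claim:chainlength} splits here into the additive constant $2$ and the multiplicative constant $\tfrac{1+H(k)}{k}$.)

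Summing this bound over the chains of phase $r$, then over all phases, and taking total expectations gives $\mathbb{E}[\#\text{misses}] \le 2\,\mathbb{E}[C] + \tfrac{1+H(k)}{k}\,\mathbb{E}[\sum_c N^*(c)]$, where $C$ is the total number of chains (non-initial elements) across all phases. I would then substitute the three deterministic inequalities already established: $C \le L + 2\eta/k$ from Lemma~\ref{lemma:injection}, $\sum_c N^*(c) \le 3\eta$ from Lemma~\ref{lemma:totalerror}, and $L \le 2\,\textsc{opt}$ from Lemma~\ref{lemma:fiat}. These give $\mathbb{E}[\#\text{misses}] \le 4\,\textsc{opt} + \tfrac{7\eta}{k} + \tfrac{3\eta}{k}H(k)$, i.e. competitive ratio exactly $\alpha$; the $O(k^2)$ slack flagged in the remark after Lemma~\ref{lemma:lengthtoerror} only adds an $O(kH(k))$ term, which is independent of $\textsc{opt}$ and hence is absorbed by the additive constant permitted in the definition of the competitive ratio. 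Feeding $\alpha$ together with the marking algorithm's ratio $2H(k)$ into Theorem~\ref{theorem:sim} then yields the corollary.

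The main obstacle is exactly this constant bookkeeping in the sharpened Claim~\ref{claim:chainlength}: one has to fix the base case $\widetilde R_0 = 1$ of the subchain recurrence so that it solves to $1 + H(m)$ rather than $1 + H(m) + O(1)$, and one has to account for $c$'s miss, $e$'s miss, and the probability-$N^*(c)/k$ branch tightly enough that no additive constant larger than $2$ survives --- any spurious additive constant here would be multiplied by $C = \Theta(\textsc{opt})$ after summation and would corrupt the competitive ratio. Everything downstream of that bound is a mechanical substitution of Lemmas~\ref{lemma:injection}, \ref{lemma:totalerror} and \ref{lemma:fiat} and of Theorem~\ref{theorem:sim}.
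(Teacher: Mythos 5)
Your proposal is correct and follows essentially the same route as the paper, which obtains this corollary simply by tracing explicit constants through the \textsc{lnonmarker} analysis (a sharpened Claim~\ref{claim:chainlength} giving $2 + \frac{1+H(k)}{k}\mathbb{E}[N^*(c)]$ per chain, then Lemmas~\ref{lemma:injection}, \ref{lemma:totalerror}, \ref{lemma:fiat}) and combining with Theorem~\ref{theorem:sim} and the $2H(k)$-competitive marker algorithm. Your bookkeeping reproduces the stated constants exactly ($4\cdot\opt + 7\eta/k + 3(\eta/k)H(k)$), including the correct handling of the $O(k^2)$ edge-case slack as an additive term.
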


For example, as $\eta/\textsc{opt} \to 0$, the competitive ratio of \textsc{lnonmarker} approaches $4$, and so the competitive ratio of this black box algorithm approaches $36$.

\section{Lower bound}\label{section:lbound}

In this section we provide a lower bound against predictive caching algorithms. The basic strategy is to construct a distribution of inputs and predictions such that the relative prediction error $\eta/\textsc{opt}$ is not too high, but any deterministic algorithm which has access to those predictions suffers a large number of cache misses in expectation, relative to $\textsc{opt}$. Yao's minimax principle then implies a lower bound against the competitive ratios of randomized algorithms at that level of relative prediction error.

More specifically, the input distribution and predictions will be chosen such that any prefix of the input completely determines the state of the algorithm. Furthermore, each prefix is sufficiently independent of future inputs that the algorithm can essentially do no better than (a) keeping previously-seen elements in the cache (for the duration of the phase), and (b) guessing arbitrarily for the remaining unseen elements.

Fix $k$ and $n$. Let $1 \leq t \leq k$ be picked later; it is a free parameter which will determine the relative prediction error $\eta/\textsc{opt}$. Let $\Omega$ be the set of sequences that can be constructed in the following manner. Let $C_1 = \{1, \dots, t\}$ and let $S_1 = \{t+1, \dots, k\}$. For all $2 \leq r \leq n$, let $C_r = [k+t] \setminus (C_{r-1} \cup S_{r-1})$ and let $S_r$ be an arbitrary subset of $C_{r-1} \cup S_{r-1}$ of size $k-t$. Then each sequence of $\Omega$ is constructed as the concatenation of $n$ \emph{phases}, where phase $r$ consists of $3k \log k$ elements of $C_r \cup S_r$ (possibly with some omissions and necessarily with some repetitions), followed by a single copy of $C_r \cup S_r$ in increasing order, without repetitions. 

That is, each phase has length $m = 3k \log k + k$. Each phase $r$ has $t$ clean elements $C_r$ (which did not appear in the previous phase) and $k-t$ stale elements $S_r$ (which did appear). For any fixed $r>1$, a uniformly random sequence of $\Omega$, conditioned on $C_r$, has a uniformly random set of stale elements $S_r$. Furthermore, conditioned on $C_r$ and $S_r$, each of the first $3k \log k$ elements of phase $r$ is independent and uniformly distributed over $C_r \cup S_r$.

We must also define the predictions for a fixed sequence. For the first $3k \log k$ elements of phase $r$, each prediction is the subsequent timestep (i.e. $h(z_i) = i+1$). For the final copy of $C_r \cup S_r$, each prediction is the end of phase $r+1$.

\begin{claim}\label{claim:error}
The overall prediction error is $O(nk^2 \log k)$, where $n$ is the number of phases in $\Omega$.
\end{claim}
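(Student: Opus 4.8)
The plan is to bound the total prediction error one phase at a time, showing that each of the $n$ phases contributes $O(k^2\log k)$ to $\eta$; summing then gives $O(nk^2\log k)$. Within a phase $r$ there are exactly two types of predictions to account for: the $3k\log k$ predictions made during the ``random part'' of the phase, each of the form $h(z_i) = i+1$, and the $k$ predictions made during the final sorted copy of $C_r\cup S_r$, each of the form ``end of phase $r+1$''. I will bound the contribution of each type separately, working with an arbitrary fixed sequence $\sigma \in \Omega$, so that the resulting bound holds for every sequence and not merely in expectation.

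For the random part, the key is a telescoping observation. Fix a page $p\in C_r\cup S_r$ and let $p_1 < p_2 < \cdots < p_\ell$ be the timesteps at which $p$ occurs within the random part of phase $r$. For each $j < \ell$, the prediction at time $p_j$ is $p_j + 1$ while the true next arrival is $p_{j+1}$, so the error there is $p_{j+1} - p_j - 1$; these errors telescope, summing to at most $p_\ell - p_1 \le 3k\log k$, the length of the random part. The final occurrence $p_\ell$ has its next arrival inside the final copy of phase $r$ (every element of $C_r\cup S_r$ appears there), hence within distance $m = 3k\log k + k$. So page $p$ contributes $O(k\log k)$ in total, and since the phase uses at most $k$ distinct pages, the random part contributes $O(k^2\log k)$.

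For the final copy of phase $r$, I will use the combinatorial structure of the $C_r, S_r$. Since $C_{r+1} = [k+t]\setminus(C_r\cup S_r)$ is disjoint from $C_r\cup S_r$ while $S_{r+1}\subseteq C_r\cup S_r$ has size $k-t$, exactly $k-t$ of the $k$ elements of the final copy (those in $S_{r+1}$) reappear in phase $r+1$, and the remaining $t$ elements, which one checks must then lie in $C_{r+2}$, reappear in phase $r+2$. In the first case the prediction ``end of phase $r+1$'' overestimates the true next arrival by at most the length of phase $r+1$, namely $m$; in the second case it underestimates by at most the length of phase $r+2$, again $m$. Hence the final copy contributes at most $km = O(k^2\log k)$, and combining with the previous paragraph, phase $r$ contributes $O(k^2\log k)$, as needed.

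A few edge cases remain: in the last one or two phases some elements never reappear and the ``end of phase $r+1$'' prediction is vacuous, but these only decrease the error (or can be pinned down by fixing those $O(k)$ predictions arbitrarily), so they are absorbed into the constant. The main point to get right is the telescoping step: a naive bound that simply charged each of the $3k\log k$ random-part predictions its worst-case error $\Theta(m)=\Theta(k\log k)$ would give $\Theta(k^2\log^2 k)$ per phase, an extra logarithmic factor; telescoping the finite gaps of each page down to the length of the random part is exactly what recovers the tight $O(k^2\log k)$ bound.
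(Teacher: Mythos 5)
Your proof is correct and follows essentially the same route as the paper: telescoping the $h(z_i)=i+1$ errors of each of the at most $k$ distinct pages in the random part to $O(k\log k)$, and bounding each of the $k$ final-copy predictions by a phase length since the true next arrival lies in phase $r+1$ or $r+2$. Your extra detail (e.g.\ verifying that the non-$S_{r+1}$ elements land in $C_{r+2}$) just makes explicit what the paper states tersely.
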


\begin{proof}
Fix a phase $r$. Among the first $3k\log k$ elements of the phase, there are at most $k$ distinct elements. For each, the prediction error telescopes to at most $3k \log k + k$. For each of the $k$ elements in the final copy of $C_r \cup S_r$, the true next arrival is either in phase $r+1$ or phase $r+2$, so the error is at most $O(k \log k)$. Thus, the error in phase $r$ is $O(k^2 \log k)$.
\end{proof}

Now we would like to lower bound the average cache misses of any algorithm on $\Omega$. We must first give some simplifying notation and a probabilistic lemma:

For any time $T$ in phase $r$, let $\sim_T$ be the equivalence relation on $\Omega$ where sequences $z$ and $z'$ are equivalent if $z_j = z_j'$ for $j \leq T$. Observe that $z \sim_T z'$ implies that $h(z_j) = h(z'_j)$ for all $j \leq T$ by how the predictions were constructed. Thus, for any equivalence class $[z]_T$ of $\sim_T$, the algorithm has identical executions on sequences in $[z]_T$ up to time $T$.

For $z \in \Omega$ and time $T$, let $\mathcal{C}_{z,T}$ be the cache at time $T$ on input $[z]_T$.

\begin{lemma}\label{lemma:boundedgeom}
Fix $k$, $l$ with $2 \leq l \leq k$. Let $X_1,\dots,X_{3k\log k}$ be independent random variables uniformly distributed over $[k]$. Let $Y_i$ be the number of distinct elements in $\{X_1,\dots,X_i\} \cap [l]$ and for $0 \leq j < l$ let $T_j$ be the number of $i$ such that $Y_i = j$. Then $\mathbb{E} [T_j] \geq k/(l-j) - 1/k$.
\end{lemma}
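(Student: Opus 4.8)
The plan is to recognise $(Y_i)$ as the ``coupons collected'' process of a coupon‑collector experiment in which the \emph{useful} coupons are the values $\{1,\dots,l\}$: the next sample is a useful new coupon exactly when it lands in one of the currently unseen useful values. Then $T_j$ is the amount of time the process spends at level $j$ before the budget of $N:=3k\log k$ samples is spent, and the whole task is to show that truncating at $N$ costs at most $1/k$, even though the untruncated level‑$j$ holding time has mean $k/(l-j)$ (which can be as large as $k/2$). To make the bookkeeping clean I would first extend $X_1,\dots,X_N$ to an i.i.d.\ sequence $X_1,X_2,\dots$, extend $Y_i$ accordingly, and let $0=\tau_0<\tau_1<\tau_2<\cdots$ be the times at which $Y$ increments, with $D_j:=\tau_{j+1}-\tau_j$ the level‑$j$ holding time. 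The standard coupon‑collector computation gives that $D_0,\dots,D_{l-1}$ are independent, $D_j$ is geometric with parameter $p_j=(l-j)/k$, hence $\mathbb{E}[D_j]=k/(l-j)$ and $\mathbb{E}[\tau_l]=\sum_{j=0}^{l-1}k/(l-j)=kH(l)\le k(1+\ln k)$.

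The core step is a deterministic comparison: for every $0\le j<l$ (counting the index $i=0$, at which $Y_0=0$, in the definition of $T_j$),
\[ T_j\ \ge\ D_j-(\tau_l-N)^+ . \]
I would prove this by a two‑case split. If $\tau_{j+1}\le N$, the entire level‑$j$ stretch $\{i:\tau_j\le i<\tau_{j+1}\}$ lies inside $\{0,\dots,N\}$, so $T_j=D_j$ and the inequality is trivial. If $\tau_{j+1}>N$, then $\tau_l\ge\tau_{j+1}=\tau_j+D_j$, so $D_j-(\tau_l-N)^+\le N-\tau_j$; on the other hand $T_j=N-\tau_j+1$ if $\tau_j\le N$ and $T_j=0$ if $\tau_j>N$, and in both sub‑cases $T_j\ge N-\tau_j\ge D_j-(\tau_l-N)^+$. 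Taking expectations yields $\mathbb{E}[T_j]\ge k/(l-j)-\mathbb{E}[(\tau_l-N)^+]$.

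It then remains to bound the tail $\mathbb{E}[(\tau_l-N)^+]=\sum_{x\ge N}\Pr[\tau_l>x]$. Since $\tau_l>x$ forces some value of $[l]$ to be missing from $X_1,\dots,X_x$, a union bound gives $\Pr[\tau_l>x]\le l(1-1/k)^x$, so $\mathbb{E}[(\tau_l-N)^+]\le lk(1-1/k)^N\le k^2e^{-N/k}$. For $N=3k\log k$ this is at most $1/k$, which is exactly the claimed slack; combining the three steps proves the lemma.

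The step I expect to be the obstacle is the deterministic comparison. The naive move would be to write $\mathbb{E}[T_j]\ge\mathbb{E}[D_j]$ minus the expectation of $D_j$ restricted to the event $\{\tau_{j+1}>N\}$; but although that event has tiny probability, on it $D_j$ has size of order $k\log k$, so this crude accounting loses $\Theta(\log k)$ and destroys the bound. The point is that when $D_j$ overshoots the budget, the part of it lying inside $\{0,\dots,N\}$ is still counted by $T_j$, and the only truly wasted quantity is $(\tau_l-N)^+$; getting the right correction term and verifying it in every case is the crux. The reason it is harmless is that $N$ exceeds $\mathbb{E}[\tau_l]$ by a constant factor while $\tau_l$, being a sum of independent geometrics, has an exponentially decaying upper tail.
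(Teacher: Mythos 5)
Your proof is correct and follows essentially the same route as the paper's: extend $X$ to an infinite i.i.d.\ sequence, use that the untruncated level-$j$ occupation time is geometric with mean $k/(l-j)$, and bound the loss from truncating at $N=3k\log k$ by $\sum_{x\geq N} l(1-1/k)^x \leq 1/k$ via a union bound over missing elements of $[l]$. The only cosmetic difference is the bookkeeping of the truncation loss---your deterministic inequality $T_j \geq D_j - (\tau_l - N)^+$ versus the paper's comparison $T_j \geq \min(\hat S_j, N) - \hat S_{j-1}$ with $\hat S_j$ the untruncated time spent at levels at most $j$---and both corrections are controlled by the same coupon-collector tail estimate.
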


\begin{proof}
Extend the sequence $X$ to an infinite sequence. Extend $Y$ accordingly, and define $\hat{T}_j$ as the number of steps in the infinite sequence at which $Y_i = j$. For $0 \leq j < l$, let $S_j = T_0 + \dots + T_j$ and $\hat{S}_j = \hat{T}_0 + \dots + \hat{T}_j$. Then $S_j = \min(\hat{S}_j, 3k\log k)$, so 
\begin{align*}
\mathbb{E} [T_j] 
&\geq \mathbb{E} [\min(\hat{S}_j, 3k\log k)] - \mathbb{E} [\hat{S}_{j-1}] \\
&= \mathbb{E} [\hat{T}_j] - \mathbb{E}[(\hat{S}_j - 3k\log k)\mathbbm{1}_{\hat{S}_j \geq 3k\log k}].
\end{align*}

For the first term, $\hat{T}_j$ is a geometric random variable and $\mathbb{E} [\hat{T}_j] = k/(l-j)$. For the second, $$\mathbb{E}[(\hat{S}_j - 3k\log k) \mathbbm{1}_{\hat{S}_j \geq 3k\log k}] \leq \sum_{c=3k\log k}^\infty \Pr[\hat{S}_j \geq c].$$

Observe that for any $c$, if $\hat{S}_j \geq c$ then $Y_c \leq j < l$, which occurs with probability at most $l \left(1 - 1/k\right)^c.$ It follows that $$\mathbb{E}[(\hat{S}_j - 3k\log k)\mathbbm{1}_{\hat{S}_j \geq 3k\log k}] \leq \sum_{c = 3k\log k}^\infty l\left(1 - \frac{1}{k}\right)^c \leq \frac{1}{k}.$$ We conclude that $\mathbb{E} [T_j] \geq k/(l-j) - 1/k$.
\end{proof}

\begin{theorem}\label{theorem:lbound}
For any deterministic algorithm with access to the predictions $h$, the expected number of cache misses achieved on an input sampled uniformly at random from $\Omega$ is at least $$O\left(n t \log \frac{k}{t} \right).$$
\end{theorem}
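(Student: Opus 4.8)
The plan is to fix an arbitrary deterministic algorithm $\mathcal A$ with access to the predictions and bound its expected cost on a uniformly random input from $\Omega$ phase by phase. Since each phase $r\ge 2$ will be treated identically, it suffices to prove $\mathbb E[\#\text{misses in phase }r]=\Omega(t\log(k/t))$ for each fixed $r\ge 2$ and then sum; I will also assume $2\le t\le k-2$, the cases $t\in\{k-1,k\}$ (where $\log(k/t)=O(1)$) being immediate. Condition throughout on the prefix of the input through the end of phase $r-1$. Because every prediction depends only on the position of a request within its phase and not on which element is requested, this prefix determines $\mathcal A$'s cache $\mathcal C_0$ at the start of phase $r$ (which is full, of size $k$, since phase $1$ already requested all of $[k]$), and it determines $C_r$; meanwhile $S_r$ is still uniform over the $(k-t)$-subsets of $B:=[k+t]\setminus C_r$ (and $|B|=k$), and conditioned on $S_r$ the first $m:=3k\log k$ requests of phase $r$ are i.i.d.\ uniform over $A_r:=C_r\sqcup S_r$. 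For $0\le\tau\le m$ write $\mathcal C_\tau$ for $\mathcal A$'s cache after the first $\tau$ of these requests; by the previous sentence $\mathcal C_\tau$ is a deterministic function of those requests.

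Set $h_\tau:=|A_r\setminus\mathcal C_\tau|=|\mathcal C_\tau\setminus A_r|$ (equality since both sets have size $k$). Conditioned on the first $\tau-1$ requests and on $S_r$, the $\tau$-th request is uniform over $A_r$, hence a miss with probability $h_{\tau-1}/k$; averaging, the expected number of misses in phase $r$ is at least $\frac1k\sum_{\tau=1}^{m}\mathbb E[h_{\tau-1}]$. The heart of the argument is a lower bound on $\mathbb E[h_\tau]$ valid for every deterministic $\mathcal A$, however adversarial: $\mathcal A$ cannot learn which not-yet-seen elements of $B$ are ``junk'' — lying in $B\setminus S_r$, so never appearing in phase $r$ — as opposed to genuine stale elements it has not yet seen. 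Let $W_\tau\subseteq B$ be the elements of $B$ appearing among the first $\tau$ requests and $Y_\tau:=|W_\tau|$; since junk never appears, $W_\tau\subseteq S_r$ and $Y_\tau$ is exactly the number of distinct stale elements seen so far. The likelihood of any observed request sequence is the same value $(1/k)^{\tau}$ for every candidate $S_r$ consistent with it (each gives $|C_r\cup S_r|=k$), so the posterior of $S_r$ given the first $\tau$ requests is uniform over $(k-t)$-subsets of $B$ containing $W_\tau$; equivalently, $B\setminus S_r$ is a uniform $t$-subset of $B\setminus W_\tau$, a set of size $k-Y_\tau$. Since $\mathcal C_\tau\subseteq C_r\sqcup W_\tau\sqcup(B\setminus W_\tau)$ with $|C_r|\le t$ and $|W_\tau|\le Y_\tau$, the cache holds at least $k-t-Y_\tau$ elements of $B\setminus W_\tau$, so in expectation at least a $\frac{t}{k-Y_\tau}$-fraction of them — that is, at least $t\cdot\frac{k-t-Y_\tau}{k-Y_\tau}$ of them — are junk. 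Hence $\mathbb E[h_\tau\mid\text{first }\tau\text{ requests}]\ge t\cdot\frac{k-t-Y_\tau}{k-Y_\tau}$ whenever $Y_\tau\le k-t-1$.

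Now combine this with Lemma~\ref{lemma:boundedgeom}. Relabel $A_r$ so that $S_r$ is the special set (of size $l=k-t$); then for each fixed $S_r$, hence unconditionally, the lemma gives $\mathbb E[T_j]\ge\frac{k}{k-t-j}-\frac1k$, where $T_j$ counts the requests $\tau\le m$ at which exactly $j$ distinct stale elements have been seen. Grouping $\sum_{\tau=1}^m\mathbb E[h_{\tau-1}]$ by the value $j=Y_{\tau-1}$ and substituting the two estimates, the weight $\frac{k-t-j}{k-j}$ from the hole bound cancels the coupon-collector factor $\frac{k}{k-t-j}$ from the lemma, leaving
\[
\mathbb E[\#\text{misses in phase }r]\ \ge\ \frac1k\sum_{j=0}^{k-t-1}t\cdot\frac{k-t-j}{k-j}\cdot\mathbb E[T_j]-O(1)\ \ge\ t\sum_{j=0}^{k-t-1}\frac1{k-j}-O(1)\ =\ t\bigl(H(k)-H(t)\bigr)-O(1),
\]
the first $\ge$ being the miss-count bound plus the $O(1)$ mismatch between the index ranges of $T_j$ and the $\tau$-sum (which costs only $O(1/k)$ in expectation, since all $k-t$ stale elements are seen within $3k\log k$ draws except with probability at most $k^{-2}$), and the second $\ge$ substituting the lemma's estimate (the $-\frac1k$ corrections again contributing $O(1)$ after dividing by $k$). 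A short case check gives $t\bigl(H(k)-H(t)\bigr)=\Omega(t\log(k/t))$ for all $2\le t\le k-2$, so $\mathbb E[\#\text{misses in phase }r]=\Omega(t\log(k/t))$, and summing over $r=2,\dots,n$ gives the claimed $\Omega(nt\log(k/t))$ bound.

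The step I expect to be the main obstacle is the second paragraph: showing that the symmetry of the posterior on $S_r$ forces $\mathcal A$, no matter how it manages its cache, to keep in expectation a $\tfrac{t}{k-Y_\tau}$-fraction of roughly $k-Y_\tau$ cache slots occupied by elements that will never appear — and, crucially, that extracting a logarithm (rather than just $\Omega(t)$) requires carrying the weight $\tfrac{k-t-j}{k-j}$ through exactly, so that it telescopes against the $k/(k-t-j)$ estimate of Lemma~\ref{lemma:boundedgeom} and leaves the harmonic sum $\sum_j 1/(k-j)$ instead of collapsing to a constant number of terms.
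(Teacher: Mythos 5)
Your proposal is correct and follows essentially the same route as the paper's proof: condition on the prefix (so the deterministic execution is determined by the requests, since predictions depend only on position), use the uniform posterior on $S_r$ to get the per-request miss probability at least $\frac{t}{k}\cdot\frac{k-t-j}{k-j}$ when $j$ stale elements have been seen --- your ``junk elements in the cache'' count is exactly the complement of the paper's cache-hit computation --- and then combine with Lemma~\ref{lemma:boundedgeom} so the weights telescope into the harmonic sum $\Omega(t \log(k/t))$. The remaining differences are cosmetic bookkeeping (index offsets, the $t \in \{1, k-1, k\}$ edge cases, and the additive corrections, which as your parenthetical estimates indicate are of size $O(1/k)$ or a $(1-1/k)$ multiplicative factor and hence harmless).
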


\begin{proof}
There are $n$ phases in $\Omega$. Fix a phase $r$. We claim that the expected number of cache misses in phase $r$ is at least $O(t \log k/t)$.

Recall that each phase has length $m = 3k\log k + k$. Let $T = (r-1)m$: the final index of phase $r-1$. Fix $z \in \Omega$. Fix $1 \leq i \leq k \log k$ and consider any class $[w]_{T+i} \subseteq [z]_T$. Picking a sequence $W \in [w]_{T+i}$ uniformly at random, the clean elements $C_r$ and the cache $\mathcal{C}_{W,T+i}$ are determined, but $S_r$ is a random variable. Let $S_\text{seen}$ be the set of elements which have already been seen in phase $r$ by time $T+i$, excluding $C_r$. Then $S_r$ must contain $S_\text{seen}$, but $S_r \setminus S_\text{seen}$ is a uniformly random subset of $[k+t] \setminus (C_r \cup S_\text{seen})$ of size $k - t - |S_\text{seen}|$. Hence, 
\begin{align*}
\Pr_W[\text{cache hit on $W_{T+i}$}]
&= \sum_{\hat{S}} \Pr_W[S_r = \hat{S}] \Pr_W[\text{cache hit on $W_{T+i}$}| S_r = \hat{S}] \\
&= \sum_{\hat{S}} \Pr_W[S_r = \hat{S}] \frac{|\mathcal{C}_{W,T+i} \cap (C_r \cup \hat{S})|}{k} \\
&= \frac{|\mathcal{C}_{W,T+i} \cap (C_r \cup S_\text{seen})| + \mathbb{E}_W |\mathcal{C}_{W,T+i} \cap (S_r \setminus S_\text{seen})|}{k}
\end{align*}
If $|\mathcal{C}_{W,T+i} \cap (C_r \cup S_\text{seen})| = a$, then $\mathcal{C}_{W,T+i}$ contains $k - a$ elements in $[k+t] \setminus (C_r \cup S_\text{seen})$, each of which is contained in $S_r \setminus S_\text{seen}$ with probability $(k - t - |S_\text{seen}|)/(k - |S_\text{seen}|)$. So the above expression is maximized when $a = |C_r \cup S_\text{seen}| = k+t$. Thus,
$$\Pr_W[\text{cache hit on $W_{T+i}$}] \leq \frac{t + |S_\text{seen}| + (k - t - |S_\text{seen}|) \cdot \frac{k - t - |S_\text{seen}|}{k - |S_\text{seen}|}}{k}.$$
Simplifying, it follows that
$$\Pr_W[\text{cache miss on $W_{T+i}$}] \geq \frac{t}{k} \frac{k - t - |S_\text{seen}|}{k - |S_\text{seen}|}.$$
If $Z \in [z]_T$ is chosen uniformly at random, then for any $0 \leq N < k-t$, the expected number of times at which $|S_\text{seen}| = N$ is at least $k/(k-t-N) - 1/k \geq k/(2(k - t - N))$ by Lemma~\ref{lemma:boundedgeom}. Summing over $N$, the expected number of cache misses in phase $r$ on input $Z$ is at least $$\sum_{N=0}^{k-t-1} \frac{t}{k} \frac{k-t-N}{k-N} \frac{k}{2(k-t-N)} = \sum_{N=0}^{k-t-1} \frac{t}{2(k-N)} \geq \frac{t}{2}\left(\frac{1}{t+1} + \dots + \frac{1}{k}\right) = \Omega(t \log k/t).$$ Since the equivalence classes of $\sim_T$ partition $\Omega$, and the above bound holds for each class, it holds for $\Omega$ as desired.
\end{proof}

Now for any fixed ``relative prediction error'' $\epsilon$ with $k \log k \leq \epsilon \leq k^2 \log k$, we can pick $t = (k^2 \log k)/\epsilon$. Then every sequence in $\Omega$ has $nt$ clean elements, and thus $\textsc{opt} = \Theta(nt)$. Furthermore, by Claim~\ref{claim:error}, we have $\eta \leq nk^2 \log k$. Thus, $\eta/\textsc{opt} \leq \epsilon$.

 But by Theorem~\ref{theorem:lbound}, any deterministic algorithm requires $nt \log k/t$ cache misses in expectation on $\Omega$. By Yao's minimax principle, for any randomized algorithm there is some input $z \in \Omega$ for which the algorithm incurs $nt \log k/t$ cache misses in expectation. Hence, we have the following result:
 
\begin{theorem}
Let $A$ be a randomized online algorithm for caching, which has access to next-arrival predictions. For any $\epsilon$, the algorithm achieves competitive ratio no better than $$\Omega\left(\log \min\left(\frac{\epsilon}{k \log k}, k \right)\right)$$ when restricted to inputs with $\eta/\textsc{opt} \leq \epsilon$.
\end{theorem}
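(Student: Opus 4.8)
The plan is to instantiate the hard distribution $\Omega$ with a carefully chosen value of the free parameter $t$, lower bound the cost of every deterministic algorithm on $\Omega$ via Theorem~\ref{theorem:lbound}, and then pass to randomized algorithms through Yao's minimax principle. The key feature that makes Yao apply cleanly is that $\textsc{opt}$ is essentially constant on $\Omega$: fixing $n$ (which I will send to infinity at the end), every sequence in $\Omega$ has exactly $nt$ clean elements, so Lemma~\ref{lemma:fiat} gives $\textsc{opt}(z) = \Theta(nt)$ for \emph{every} $z \in \Omega$. Hence for any randomized algorithm $A$, $\sup_{z \in \Omega}\mathbb{E}[\cost_A(z)] \geq \inf_{\text{deterministic }B}\mathbb{E}_{z\sim\Omega}[\cost_B(z)]$, and dividing through by $\Theta(nt)$ converts any deterministic lower bound on the expected cost over $\Omega$ into a lower bound on $A$'s competitive ratio, the additive constant in the definition of competitive ratio being absorbed by taking $n$ large.

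For the main regime, suppose $C k\log k \leq \epsilon \leq k^2\log k$ for a suitable absolute constant $C$, and set $t = \lceil C' k^2\log k/\epsilon\rceil$, choosing $C'$ so that Claim~\ref{claim:error} (overall error $O(nk^2\log k)$) together with $\textsc{opt} = \Theta(nt)$ forces $\eta/\textsc{opt} \leq \epsilon$ for every $z \in \Omega$; in this range $1 \leq t \leq k$. Then every deterministic algorithm with access to the predictions incurs $\Omega(nt\log(k/t))$ cache misses in expectation on $\Omega$ by Theorem~\ref{theorem:lbound}, so by the paragraph above, the competitive ratio of any randomized algorithm restricted to inputs with $\eta/\textsc{opt}\leq\epsilon$ is $\Omega(\log(k/t)) = \Omega(\log(\epsilon/(k\log k)))$.

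It remains to handle the two boundary regimes and combine into the $\min$. If $\epsilon > k^2\log k$, we cannot take $t<1$, so we use $t=1$; then $\Omega$ still satisfies $\eta/\textsc{opt} = O(k\log k)\leq\epsilon$ and Theorem~\ref{theorem:lbound} recovers the classical bound $\Omega(\log k)$, which is exactly the $\min$ with $k$. If $\epsilon < Ck\log k$, then $\log(\epsilon/(k\log k)) = O(1)$, so $\Omega\!\left(\log\min(\epsilon/(k\log k),k)\right)$ is vacuous and there is nothing to prove. Putting the three cases together gives the stated bound $\Omega\!\left(\log\min(\epsilon/(k\log k),k)\right)$ uniformly in $\epsilon$.

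The only genuinely delicate ingredient in this argument is Theorem~\ref{theorem:lbound} itself --- the assertion that, because conditioning on a prefix of a uniformly random sequence from $\Omega$ leaves the stale set and the remaining arrivals near-uniform (and the predictions reveal nothing beyond the current step), no algorithm can do materially better than keeping already-seen elements and guessing blindly for the rest, made quantitative through the coupon-collector estimate of Lemma~\ref{lemma:boundedgeom}. Since that theorem is available here, the work reduces to the parameter bookkeeping above: verifying $\eta/\textsc{opt}\leq\epsilon$ for the chosen $t$, confirming $1\leq t\leq k$, and checking that the bounds $\log(\epsilon/(k\log k))$ and $\log k$ assemble into the single expression $\log\min(\epsilon/(k\log k),k)$ across all regimes. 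I expect this bookkeeping --- in particular pinning down the constant so that the distribution genuinely lies in the $\eta/\textsc{opt}\leq\epsilon$ class rather than in an $O(\epsilon)$-relaxation of it --- to be the only point requiring any care.
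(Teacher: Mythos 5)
Your proposal is correct and follows essentially the same route as the paper: choose $t \approx (k^2\log k)/\epsilon$ in the distribution $\Omega$, combine Claim~\ref{claim:error} ($\eta = O(nk^2\log k)$) with $\textsc{opt} = \Theta(nt)$ to verify the error constraint, invoke Theorem~\ref{theorem:lbound} for the $\Omega(nt\log(k/t))$ deterministic bound, and pass to randomized algorithms via Yao's minimax principle. Your treatment of the boundary regimes and of the constant needed to get $\eta/\textsc{opt} \leq \epsilon$ exactly (rather than $O(\epsilon)$) is in fact slightly more careful than the paper's own write-up.
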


\paragraph{Acknowledgments.} I would like to thank Costis Daskalakis and Piotr Indyk for introducing me to the area of learning-augmented algorithms, and for their advice and encouragement. 


\begin{thebibliography}{10}

\bibitem{Achlioptas2000}
Dimitris Achlioptas, Marek Chrobak, and John Noga.
\newblock Competitive analysis of randomized paging algorithms.
\newblock {\em Theoretical Computer Science}, 234(1):203--218, 2000.

\bibitem{Ajwani2007}
Deepak Ajwani and Tobias Friedrich.
\newblock Average-case analysis of online topological ordering.
\newblock In {\em Proceedings of the 18th International Conference on
  Algorithms and Computation}, ISAAC'07, pages 464--475, Berlin, Heidelberg,
  2007.

\bibitem{Fiat1991}
Amos Fiat, Richard~M. Karp, Michael Luby, Lyle~A. McGeoch, Daniel~D. Sleator, and
  Neal~E. Young.
\newblock Competitive paging algorithms.
\newblock {\em Journal of Algorithms}, 12(4):685--699, 1991.

\bibitem{Gollapudi2019}
Sreenivas Gollapudi and Debmalya Panigrahi.
\newblock Online algorithms for rent-or-buy with expert advice.
\newblock In {\em International Conference on Machine Learning}, pages
  2319--2327, 2019.

\bibitem{Hsu2018}
Chen-Yu Hsu, Piotr Indyk, Dina Katabi, and Ali Vakilian.
\newblock Learning-based frequency estimation algorithms.
\newblock In {\em International Conference on Learning Representations}, 2019.

\bibitem{KP2000}
Elias Koutsoupias and Christos Papadimitriou.
\newblock Beyond competitive analysis.
\newblock {\em SIAM Journal on Computing}, 30(1):300--317, 2000.

\bibitem{Kraska2018}
Tim Kraska, Alex Beutel, Ed~H. Chi, Jeffrey Dean, and Neoklis Polyzotis.
\newblock The case for learned index structures.
\newblock In {\em Proceedings of the 2018 International Conference on
  Management of Data}, SIGMOD '18, pages 489--504, New York, NY, USA, 2018.

\bibitem{LV2018}
Thodoris Lykouris and Sergei Vassilvitskii.
\newblock Competitive caching with machine learned advice.
\newblock In {\em International Conference on Machine Learning}, pages
  3302--3311, 2018.

\bibitem{McGeoch1991}
Lyle~A. McGeoch and Daniel~D. Sleator.
\newblock A strongly competitive randomized paging algorithm.
\newblock {\em Algorithmica}, 6(1):816--825, Jun 1991.

\bibitem{McGregor2014}
Andrew McGregor.
\newblock Graph stream algorithms: A survey.
\newblock {\em SIGMOD Record}, 43(1):9--20, May 2014.

\bibitem{MV2017}
Andr{\'e}s Mu\~{n}oz Medina and Sergei Vassilvitskii.
\newblock Revenue optimization with approximate bid predictions.
\newblock In {\em Proceedings of the 31st International Conference on Neural
  Information Processing Systems}, NIPS'17, pages 1856--1864, USA, 2017.

\bibitem{Mitzenmacher2019}
Michael Mitzenmacher.
\newblock Scheduling with predictions and the price of misprediction.
\newblock {\em arXiv preprint arXiv:1902.00732}, 2019.

\bibitem{Purohit2018}
Manish Purohit, Zoya Svitkina, and Ravi Kumar.
\newblock Improving online algorithms via ML predictions.
\newblock In {\em Advances in Neural Information Processing
  Systems 31}, pages 9661--9670. 2018.

\bibitem{Sleator1985}
Daniel~D. Sleator and Robert~E. Tarjan.
\newblock Amortized efficiency of list update and paging rules.
\newblock {\em Communications of the ACM}, 28(2):202--208, February 1985.

\bibitem{Spielman2004}
Daniel~A. Spielman and Shang-Hua Teng.
\newblock Smoothed analysis of algorithms: Why the simplex algorithm usually
  takes polynomial time.
\newblock {\em Journal of the ACM}, 51(3):385--463, May 2004.

\bibitem{Spielman2009}
Daniel~A. Spielman and Shang-Hua Teng.
\newblock Smoothed analysis: an attempt to explain the behavior of algorithms
  in practice.
\newblock {\em Communications of the ACM}, 52(10):76--84, 2009.

\end{thebibliography}
\end{document}